\setlist{nolistsep} 
 \renewcommand*{\AC@hyperlink}[2]{%
   \begingroup
     \hypersetup{hidelinks}%
     \hyperlink{#1}{#2}%
   \endgroup
 }%
\let\originalleft\left
\let\originalright\right
\renewcommand{\left}{\mathopen{}\mathclose\bgroup\originalleft}
\renewcommand{\right}{\aftergroup\egroup\originalright}
\crefname{equation}{Eq.}{Eqs.}
\Crefname{equation}{Equation}{Equations}
\crefname{figure}{figure}{figures}
\Crefname{Ffigure}{Figure}{Figures}
\newtheorem{corollary}{Corollary}
\newtheorem{theorem}{Theorem}
\newtheorem{lemma}{Lemma}
\newcommand{\e}{\ensuremath\mathrm{e}}
\renewcommand{\i}{\ensuremath\mathrm{i}}
\DeclareMathOperator{\Tr}{Tr}
\renewcommand{\Im}{\operatorname{Im}}
\newcommand{\fro}{\mathrm{F}}
\DeclareMathOperator*{\argmin}{arg\min}
\DeclareMathOperator*{\argmax}{arg\max}
\renewcommand{\L}{\operatorname{L}}
\newcommand{\id}{\mathrm{id}}
\DeclareMathOperator{\U}{U}
\newcommand{\CC}{\mathbb{C}}
\newcommand{\RR}{\mathbb{R}}
\newcommand{\NN}{\mathbb{N}}
\newcommand{\1}{\mathds{1}}
\newcommand{\EE}{\mathbb{E}}
\newcommand{\argdot}{{\,\cdot\,}}
\renewcommand{\vec}[1]{\boldsymbol{#1}}
\DeclarePairedDelimiterX{\abs}[1]{\lvert}{\rvert}{%
  \ifblank{#1}{\argdot}{#1}
}   
\DeclarePairedDelimiterX\norm[1]\lVert\rVert{%
  \ifblank{#1}{\argdot}{#1}
}   
\newcommand{\pnorm}[2][p]{\norm{#2}_{#1}} 
\DeclarePairedDelimiterX{\iiiNorm}[1]{\lvert}{\rvert}{%
  \delimsize\lvert\delimsize\lvert#1\delimsize\rvert\delimsize\rvert%
}
\DeclarePairedDelimiterXPP\snorm[1]{}\lVert\rVert{_\infty}{\ifblank{#1}{\argdot}{#1}}   
\DeclarePairedDelimiterXPP\twonorm[1]{}\lVert\rVert{_2}{\ifblank{#1}{\argdot}{#1}}   
\DeclarePairedDelimiterXPP\onenorm[1]{}\lVert\rVert{_1}{\ifblank{#1}{\argdot}{#1}}   
\DeclarePairedDelimiterXPP\trnorm[1]{}\lVert\rVert{_1}{\ifblank{#1}{\argdot}{#1}}   
\DeclarePairedDelimiterXPP\fnorm[1]{}\lVert\rVert{_{\fro}}{\ifblank{#1}{\argdot}{#1}}   
\DeclarePairedDelimiterXPP\dnorm[1]{}\lVert\rVert{_\diamond}{\ifblank{#1}{\argdot}{#1}}   
\DeclarePairedDelimiterXPP\cbnorm[1]{}\lVert\rVert{_\mathrm{cb}}{\ifblank{#1}{\argdot}{#1}}   
\DeclarePairedDelimiterXPP\ddnorm[1]{}\lVert\rVert{_{\diamond\rightarrow \diamond}}{\ifblank{#1}{\argdot}{#1}}   
\DeclarePairedDelimiterXPP\ssnorm[1]{}\lVert\rVert{_{\infty\rightarrow\infty}}{\ifblank{#1}{\argdot}{#1}}   
\DeclarePairedDelimiterX\Set[1]\{\}{%
  
  #1
}
\DeclarePairedDelimiterX\innerp[2]{\langle}{\rangle}{%
  \ifblank{#1}{\argdot}{#1} , \ifblank{#2}{\argdot}{#2}%
}
\DeclarePairedDelimiter{\bra}{\langle}{\vert}
\DeclarePairedDelimiter{\ket}{\vert}{\rangle}
\DeclarePairedDelimiterX\braket[2]{\langle}{\rangle}%
  {#1\kern0.15ex\delimsize\vert\kern0.15ex\mathopen{}#2}
\DeclarePairedDelimiterX\ketbra[2]{\vert}{\vert}%
  {#1\kern0.15ex\delimsize\rangle\delimsize\langle\kern0.15ex\mathopen{}#2}
\DeclarePairedDelimiterX\sandwich[3]{\langle}{\rangle}%
  {#1\,\delimsize\vert\kern0.15ex\mathopen{}#2\kern0.15ex\delimsize\vert\kern0.15ex\mathopen{}#3}
\DeclarePairedDelimiterX{\av}[1]{\langle}{\rangle}{%
  \ifblank{#1}{\argdot}{#1}
}   
\newcommand{\class}[1]{{\ensuremath{\mathsf{#1}}}}
\newcommand{\NP}{\class{NP}}
\newcommand{\MaxCut}{\class{MaxCut}}
\renewcommand{\S}{S^{\vec x}}
\DeclarePairedDelimiterX{\expvar}[1]{\llbracket}{\rrbracket}{%
  \ifblank{#1}{\argdot}{#1}
}   
\renewcommand{\c}{\vec w}
\newcommand{\ci}{w_i}
\newcommand{\cj}{w_j}
\newcommand{\cscalar}{w}
\newcommand{\om}{relative correlation}
\newcommand{\mg}{m_g}
\newcommand{\sn}[1]{\av{#1}_{s}}
\newcommand{\edis}[1]{\av{#1}_{\vec \theta}}
\newcommand{\edistheta}[1]{\av{#1}_{\vec \theta}}
\newcommand{\edisboth}[1]{\av{#1}_{s,\vec \theta}}
\newcommand{\edisG}[1]{\av{#1}_{G}}
\newcommand{\specwidth}{\nu}
\newcommand{\estat}{\epsilon_\mathrm{stat}}
\newcommand{\esys}{\epsilon_\mathrm{sys}}
\newcommand{\epstot}{\epsilon_{\mathrm{tot}}}
\newcommand{\deriv}{\delta}
\newcommand{\sigakt}{\edistheta{a^2_k}}
\newcommand{\sigalt}{\edistheta{a^2_l}}
\newcommand{\prefac}{\zeta}
\newcommand{\hhu}{%
  Institute for Theoretical Physics, 
  Heinrich Heine University D{\"u}sseldorf,
  Germany%
}
\title{Fast gradient estimation for variational quantum algorithms} 
\author{Lennart Bittel}
\email{lennart.bittel@uni-duesseldorf.de}
\author{Jens Watty}
\email{jens.schneider@uni-duesseldorf.de}
\author[MK]{Martin Kliesch}
\email{science@mkliesch.eu}
\affiliation{\hhu}
\begin{document}
\maketitle 

\hypersetup{%
       pdftitle = {Fast gradient estimation for variational quantum algorithms},
       pdfauthor = {Lennart Bittel, Jens Watty, Martin Kliesch},
       pdfsubject = {Quantum computation},
       pdfkeywords = {Hybrid, quantum, algorithm, NISQ, 
              measurement, derivative, gradient, descent, sample complexity, 
              VQE, variational, eigensolver,
              VQA, algorithms, 
              QAOA, approximate, optimization, 
              generalized, PSR, parameter shift rule, 
              PQC, parametrized ,parameterized, circuit, 
              finite, central, differences, 
              barren, plateau, 
              Hamiltonian, generator, 
              shot noise, variance, 
              statistical, systematic, error, 
              L1, norm, regularization, 
              compressed, compressive, sensing, 
              Bayesian, estimation, prior, 
              MMSE, minimum mean squared%
              }, 
      }

\begin{abstract}
Many optimization methods for training variational quantum algorithms are based on estimating gradients of the cost function. Due to the statistical nature of quantum measurements, this estimation requires many circuit evaluations, which is a crucial bottleneck of the whole approach. 
We propose a new gradient estimation method to mitigate this measurement challenge and reduce the required measurement rounds. Within a Bayesian framework and based on the generalized parameter shift rule, we use prior information about the circuit to find an estimation strategy that minimizes expected statistical and systematic errors simultaneously. We demonstrate that this approach can significantly outperform traditional gradient estimation methods, reducing the required measurement rounds by up to an order of magnitude for a common QAOA setup. 
Our analysis also shows that an estimation via finite differences can outperform the parameter shift rule in terms of gradient accuracy for small and moderate measurement budgets. 
\end{abstract}

\section{Introduction}
It has been demonstrated that quantum devices can outperform classical computers on computational problems specifically tailored to the hardware \cite{Arute2019QuantumSupremacy_short_auth,Zhong20shortened_auth}. 
While this has been an important milestone, the ultimate goal is a \emph{useful quantum advantage}, i.e.\ a similar speedup for a problem with relevant applications. 
The central practical challenge is that only \ac{NISQ} hardware is available for the foreseeable future \cite{Pre18}. 
This restriction means that quantum devices have limited qubit numbers and can only run short quantum circuits, as the quantum computation must be finished before noise effects become too dominant. 
For this reason, great efforts are being made to design quantum algorithms in a \ac{NISQ}-friendly way. 
One central idea in this effort is to trade an increased number of circuit evaluations and additional classical computation for reduced qubit numbers and lower circuit depths. 

One of the leading approaches toward achieving useful quantum advantages is given by \acp{VQA}. 
They address the problem of estimating the ground state energy of a quantum many-body Hamiltonian via a variational optimization, as follows. 
The quantum part of \acp{VQA} is implemented using \acp{PQC}, which are used to prepare the variational quantum states. 
In order to interface it with a classical computer, the energy and the energy gradient w.r.t.\ the variational parameters are typically estimated. 
Then a classical computer, which typically runs some gradient descent-based algorithm, is used to minimize the energy via repeated parameter updates and estimations of the energy functional. 

\begin{figure}[t]
    \centering
    \includegraphics[width=1.04\linewidth]{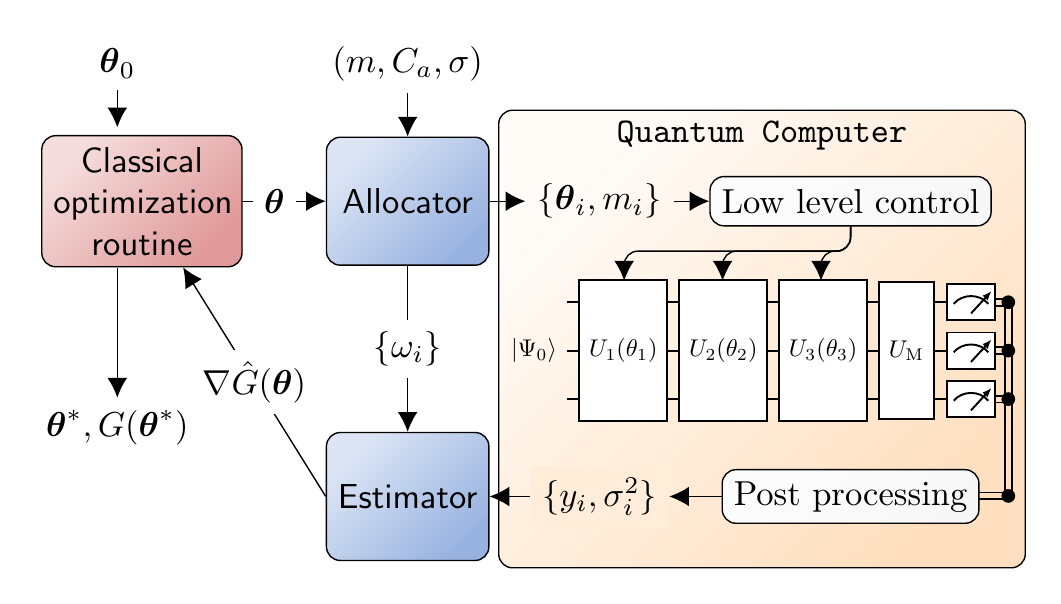}
    \caption{
    Sketch of the gradient estimation for \ac{VQA} optimization routines as shown in \cref{alg:bay}.
    }
    \label{fig:sketch}
\end{figure}

Several challenges occur in this approach. 
First, on the classical computation side, the optimization might reach a barren plateau for the objective function \cite{McClean2018BarrenPlateausIn} or get stuck in local minima \cite{Bittel21TrainingVariationalQuantum}. 
Barren plateaus can sometimes be avoided by using smart initializations for the parameters \cite{Zhou2018QuantumApproximate}. 
Moreover, sophisticated constructions of the quantum circuit family can help to bypass such problematic regions in the parameter space \cite{Grimsley19AnAdaptiveVariational,Grimsley22ADAPT-VQE}. 
Local minima can at least partially be avoided using natural gradients \cite{Wierichs2020AvoidingLocalMinima}. 
Second, the measurement effort of the quantum computer can pose a critical bottleneck for \acp{VQA}. 
The reason is that 
\begin{compactenum}[(i)]
\item many iterations steps are done in the classical optimization, 
\item several partial derivatives are needed for each gradient update step,
\item multiple measurement settings might be needed for the estimation of observables such as local Hamiltonians,
\item quantum measurements are probabilistic, requiring $O(1/\epsilon^2)$ measurement rounds for $\epsilon$ accuracy
\end{compactenum}
and this can add up to a large number of total number of measurements rounds. 
Since quantum measurements are destructive, one also needs to prepare the entire variational state from scratch for each measurement round.

In this work, we develop a new gradient estimation algorithm that balances statistical and systematic errors which achieve a better gradient estimate with fewer measurement rounds than conventional estimators.
Specifically, we first characterize both the statistical and systematic error that arise in the estimation procedure, 
where for the systematic error, we introduce a Bayesian framework using prior information and assumptions about the system to estimate it. 
Then, we develop allocator methods, which for a given measurement budget determine an optimal strategy, namely what and how often we want to measure each circuit configuration, in order to minimize the total error. Finally the estimator returns a gradient estimate based on the measurement outcomes.
A sketch of the procedure is shown in \cref{fig:sketch}. 

The Bayesian approach takes advantage, but also requires prior knowledge about the system. 
We develop strategies to obtain this prior information depending on the circuit depth: 
\begin{compactenum}[(i)]
\item For short circuit, we use experimental or numerical/analytical observations. 
\item For higher depths, we use unitary 2-design properties of random circuits. 
\item In the intermediate regime, we use an interpolation of the two. 
\end{compactenum}

For the analysis in this paper we neglect all error sources arising from imperfect quantum hardware and only focus on noise due to finite measurements (i.e.\ shot noise). 
Additionally, we assume time periodic unitaries, meaning that, without loss of generality, the eigenenergies of the generators can be assumed to be integers. 

Finally, we demonstrate numerically, that the Bayesian approach outperforms previous \ac{PSR} approaches in terms of gradient estimation and \ac{VQA} optimization accuracy.

\subsection{Related work}
There are several approaches to estimate gradients in \acp{VQA}, notably the \ac{PSR} \cite{Li17HybridQuantum-Classical,Mitarai18QuantumCircuitLearning} is able to obtain unbiased gradient estimates for generators with only two distinct eingenvalues. 
Further generalizations were made in Ref.~\cite{Schuld19EvaluatingAnalyticGradients} for a wider class of Hamiltonians. 
This approach often requires ancillary qubits or unitaries generated by commuting generators with two eigenvalues. 
There are also generalizations proposed for non-commuting generators~\cite{Banchi21MeasuringAnalyticGradients} in a stochastic framework. 
These approaches generally require measurement settings not contained in the \ac{VQA}-ansatz, but which can be assumed to be feasible for real hardware. 
There also exists unbiased estimators for arbitrary periodic unitaries~\cite{Wierichs21GeneralParameter-Shift,vidal_calculus_2018}, where all measurements are contained in the ansatz class.

There are also strategies~\cite{Harrow19LowDepthGradient, crooks2019psrGateDecomposition, kottmann2021fermionicPSR} that replace the actual observable underlying the gradient estimation with some surrogate observables. 
Another research direction has been to find efficient estimation schemes for the whole gradient~\cite{Gilyen17OptimizingQuantumOptimization,Cade20StrategiesForSolving} instead of its individual partial derivatives and thus reducing the required measurement resources. 

We approach the gradient estimation differently. 
Namely, we use that \acp{VQA} due to their setup experience some typical behavior,
which can be analyzed in advance and during the \ac{VQA} optimization. 
This allows us to also evaluate the performance of biased estimation strategies, which under very reasonable assumptions on measurement budget, can significantly outperform their unbiased counterparts. 
We use the general framework of periodic parametrized quantum gates \cite{Wierichs21GeneralParameter-Shift} but believe that a similar Bayesian reasoning can also benefit other \ac{VQA} ansatz classes and gradient strategies. 
It should therefore not be regarded as a competitor to existing methods, but as a complementary strategy to further reduce the measurement effort of gradient estimation strategies.


\subsection{Notation}
We use the notation $[n]\coloneqq \{1,\dots,n\}$.
The Pauli matrices are denoted by $X$, $Y$ and $Z$.
An operator $O$ acting on subsystem $j$ of a larger quantum system is denoted by $O_{j}$, e.g.\ $X_1$ is the Pauli-$X$-matrix acting on subsystem $1$.
$\ell_p$-norms including $p = 0$ are denoted by $\pnorm{}$.
We use several symbols that are summarized in \cref{sec:symbols}. 

\section{\texorpdfstring{\Aclp{VQA}}{Variational quantum algorithms}}
In a \ac{VQA} the goal is to find parameters $\vec\theta$ such that a cost function is minimized.
In general, this cost function is given by
\begin{align}\label{eq:VQA}
    G(\vec\theta) \coloneqq \bra{\Psi_0} \left[\prod_{\alpha=1}^L U_\alpha\left(\theta_\alpha\right)\right]^\dagger O \left[\prod_{\alpha=1}^L U_\alpha\left(\theta_\alpha\right)\right] \ket{\Psi_0}\,
\end{align}
where $\ket{\Psi_0}$ is the initial state, $U_\alpha\left(\theta_\alpha\right)=\e^{-\i \theta_\alpha H_\alpha}$ are the unitary gates generated by $H_\alpha$ and $O$ is the observable encoding the optimization problem.
In this work, we are considering unitaries that are $T$-periodic in $\theta_l$ (w.l.o.g. $T=2\pi$), which implies that all eigenvalues of $H_l$ are integers.

Estimating the gradient of $G(\vec\theta)$  w.r.t.\ $\vec\theta$ is an important task, as most optimization algorithms are gradient descent based and thus require an efficient approximation of the gradient.
Our strategy estimates the gradient by the functions partial derivatives. For this it is convenient to define the cost function at a point shifted by a value of $x$ in the parameter $\theta_l$
\begin{align}
\begin{aligned}
    F_l(x)&\coloneqq G(\vec \theta+ x\,\vec e_l )\\
  &=\bra{\Psi'} U_l^\dagger(x)O' U_l(x)\ket{\Psi'}\,,
\end{aligned}
\end{align}
where $\vec e_l$ is the $l$-th canonical basis vector and the other layers are absorbed into the observable as $O'$ and the initial state as $\ket{\Psi'}$.
Furthermore, we used that $U_l(\theta_l+x)=U_l(\theta_l)U_l(x)$.
For later reference, the modified state and observable are
\begin{align}
\begin{aligned}
        \ket{\Psi'} &= \left[\prod_{\alpha=1}^{l-1} U_\alpha\left(\theta_\alpha\right)\right]\ket{\Psi_0}\, \text{, and}\\
    O' &= \left[\prod_{\alpha=l}^L U_\alpha\left(\theta_\alpha\right)\right]^\dagger O \left[\prod_{\alpha=l}^L U_\alpha\left(\theta_\alpha\right)\right].
\end{aligned}
\end{align}

The evaluation at point $\vec \theta$, is therefore just $G(\vec \theta)=F_l(0)$ and $\frac{\mathrm{d} G(\vec \theta)}{\mathrm{d}\theta_l}=F_l'(0)$. 
We will henceforth focus only on the estimation of a single partial derivative w.r.t.\ a parameter $\theta_l$. 
In the interest of readability we write 
$U$, and $F$ instead of $U_l$, and $F_l$, as well as $\ket{\Psi}$ and $O$ instead of $\ket{\Psi'}$ and $O'$.

In this restricted view, we are now going to examine the structure of the function $F(x)$ more closely.
The parametrized unitary that defines this function has the form
\begin{align}
	U(\theta)=\e^{-\i \theta H}=\sum_{i=1}^{n_{\lambda}} P_i \e^{-\i \lambda_i \theta}\,,
\end{align}
where $H$ is a Hermitian generator and the $P_i$ are the projectors onto the eigenspaces corresponding to the eigenvalues $\{\lambda_1,\dots,\lambda_{n_\lambda}\}$ of $H$ in ascending order.

Using this notation, we obtain
\begin{equation}\label{eq:costfun_onelayer}
\begin{aligned}
    F(x)&=\bra{\Psi} U^\dagger(x)O U(x)\ket{\Psi}\,\\
  &=\sum_{i,j=1}^{n_{\lambda}} \e^{\i(\lambda_i-\lambda_j)x}\bra{\Psi} P_i O P_j\ket{\Psi}\,,
\end{aligned}
\end{equation}
where each $c_{ij}\coloneqq \bra{\Psi}P_iOP_j\ket{\Psi}$ is just a scalar. 
This lets us rewrite the function as
\begin{align}
  \label{eq:sincosform}
  F(x) &=\sum_{i,j=1}^{n_{\lambda}} \e^{\i(\lambda_i-\lambda_j)x}c_{ij}
  =\sum_{k=1}^{n_\mu} c_k\e^{\i\mu_k x}+ c_k^*\e^{-\i\mu_k x}\nonumber\\
  &=\sum_{k=1}^{n_\mu} \bigl( a_k\sin(\mu_kx)+b_k\cos(\mu_kx) \bigr) \,,
\end{align}
where $\mu_k\in\{\left|\lambda_i-\lambda_j\right|\}$ are all possible eigenvalue differences of the generator $H$, 
\begin{align}
c_k=\sum_{i,j:\lambda_i-\lambda_j=\mu_k} c_{ij}
\end{align}
and $\vec c=\frac{\vec b +\i \vec a}{2}$ with $\vec a, \vec b \in \RR^{n_\mu}$ are Fourier coefficients.
For the total number of frequencies, it follows $n_{\mu} = \abs{\{\mu_k\}} \leq \binom{n_\lambda}{2}$.
The derivative at $x=0$ is
\begin{align}
    \label{eq:real_deriv}
    \deriv\coloneqq F'(0)=\sum_{k=1}^{n_\mu} \mu_k a_k\,.
\end{align}

For generators with two eigenvalues, where we can set w.l.o.g.\ $\mu_k \in \{0,\nu\}$, it has been shown that an unbiased estimate for the partial derivative can be obtained via
\begin{align}\label{eq:psr}
	\deriv=\nu\frac{F(\frac{\pi}{2\nu})-F(-\frac{\pi}{2\nu})}{2} \,,
\end{align}
which is known as the \ac{PSR} \cite{Schuld19EvaluatingAnalyticGradients}. 

In essence, we are going to generalize this method.
A helpful tool for this task is the antisymmetric projection
\begin{align}
    \label{eq:sinform}
    f(x)=\frac{F(x)-F(-x)}{2}= \sum_{k=1}^{n_\mu}a_k\sin(\mu_k x)\,.
\end{align}
We are only considering symmetric measurement schemes as symmetrizing an estimation method will not make the prediction worse~\cite{Wierichs21GeneralParameter-Shift}. 
As such, we refer only to the positive measurement positions $x$ of $f(x)$, 
knowing that estimates of $F(+x)$ and $F(-x)$ are required to determine it. 
We will also omit the $\mu=0$ frequency, since it does not affect the derivative. Additionally, $\nu\coloneqq\left\|\vec \mu\right\|_\infty$ refers to the spectral width of the generator, meaning $\vec \mu\subset[ \nu ]$, since the periodicity of $U(x)$ implies that the entries of $\vec \mu$ are positive integers.

\newcommand{\allocator}{allocator}
\section{Gradient estimation approach}
The \emph{allocator} decides the measurement resource allocation and how to generate the estimate. In particular, we use a symmetric linear estimator of the derivative which for a set of measurement positions $\vec x\in [0,\pi)^{n_x}$ and number of measurements for each position $\vec m\in \NN^{n_x}$ returns a gradient estimate
\begin{align}
    \hat{\deriv}=\sum_{i=1}^{n_{x}} \ci y_i \label{eq:linest}\,,
\end{align}
where $y_i$ are the empirical estimates of $f(x_i)$ using $m_i$ measurement rounds. Since each $x_i$ requires $2$ measurement settings, the total number of settings is $2n_x$.
In the following we develop strategies to find optimal $\vec x,\vec m$ and $\c$.

The error 
\begin{align}
    \epstot\coloneqq \hat{\deriv}-\deriv
\end{align}
between our estimator guess $\hat\deriv$ and the true derivative $\deriv$
is an important metric that we are going to use as a figure of merit for choosing our estimation parameters.
In practice, imperfections of current quantum hardware and the lack of quantum error correction will cause a significant noise level when evaluating the cost function on the quantum device.
However, even on a perfect device, the measurement process introduces a shot noise error as the estimates are determined by sampling from the underlying multinomial distribution.
We denote the expectation value over the shot noise by $\sn{\argdot}$.

The expected error under shot noise can then be written as
\begin{align}
    \sn{\epstot}&=\sum_{i=1}^{n_x} \ci \sn{y_i}-\deriv\\
    &=\sum_{i=1}^{n_x} \ci\left(\sum_{k=1}^{n_\mu}a_k\sin(\mu_kx_i)\right)-\sum_{k=1}^{n_\mu} \mu_ka_k\\
    &\eqqcolon (\S\c-\vec\mu)^T \vec a\, ,
\end{align} 
where we have used the definition (\cref{eq:real_deriv}) of $\deriv$, the unbiased nature of the estimate $\sn{y_i}=f(x_i)$ and \cref{eq:sinform} for $f$. We also defined the matrix $\S$ with entries $\S_{ki}\coloneqq \sin(\mu_k x_i)$. 

For the mean squared error this means
\begin{align}
    \sn{\epstot^2}&=\av*{\Big(\sum_{i=1}^{n_x} \ci y_i-\deriv\Big)^2}_s \nonumber\\
    &=\left[(\S\c-\vec\mu)^T \vec a\right]^2+\sum_i \ci^2\left(\sn{y_i^2}-\sn{y_i}^2\right)\nonumber\\
    &\eqqcolon \esys^2+\sn{\estat^2} \, ,
\end{align} 
where the first term describes the systematic error resulting from the method not accurately determining the derivative even for exact measurements and the second term describes the statistical error arising from measurement shot noise. 

\subsection{Estimating the statistical error}
For the statistical error, each term $\sn{y_i^2}-\sn{y_i}^2$ is the variance for 
the measurement position $x_i$ resulting from shot-noise errors. 
If the single shot variance at position $x_i$ is $\sigma_i^2$, we find the expression \begin{align}\sn{y_i^2}-\sn{y_i}^2=\frac{\sigma_i^2}{m_i}\,
\end{align}
 where $m_i$ is the number of measurement rounds performed for $x_i$.
For a fixed measurement budget $m=\sum_{i=1}^{n_x} m_i$, the optimal measurement allocation is given by
\begin{align}
    \sn{\estat^2}= \min_{\vec m: \|\vec m\|_1=m}\sum_{i=1}^{n_x} \ci^2\frac{\sigma_i^2}{m_i}&=\frac{(\sum_{i=1}^{n_x}|\ci|\sigma_i)^2}{m}\,,\nonumber
\end{align}
with $m_i=m\frac{|\ci| \sigma_i}{\sum_{j=1}^{n_x}|\cj|\sigma_j}$. If one assumes constant shot noise $\sigma_i\equiv \sigma$ regardless of the measurement position which we will hence force do, this simplifies to 
\begin{align}\label{eq:estat_uni}
    \sn{\estat^2}=\frac{\sigma^2}{m}\norm{\c}_1^2\quad \textrm{with}\quad  m_i=\frac{|\ci|}{\norm{\c}_1}m\,.
\end{align}
While $\sigma_i$ or $\sigma$ are a priori unknown, it is generally possible to give a rough estimate of $\sigma$ beforehand and to determine an estimate of $\sigma_i$ after only a few measurement rounds are performed. 
For this reason we assume that $\sigma$ is a known quantity in the following sections.

\subsection{Estimating the systematic error through a Bayesian approach}
Determining the systematic error is more challenging because it depends explicitly on the Fourier coefficients $\vec a$ which are not known. 
For our analysis we assume that the estimator will be used for an ensemble of multiple different positions $\vec \theta$, as one expects to occur during a full gradient descent optimization routine. So instead of a particular instance we want to find a strategy where the average total error over the entire ensemble is minimized.
The benefit of this approach is that the average only requires knowledge of the general behavior of the Fourier coefficients, not specific values of the particular realization.

Formally, we assume that a distribution $\mathcal{D}_{\vec \theta}$ over the relevant positions $\vec \theta$ induces a distribution of the Fourier coefficients $\vec a$.
One natural distribution $\mathcal{D}_{\vec \theta}$ is the uniform distribution over all parameter points $\vec\theta\in[-\pi,\pi)^L$, which can be motivated as a model for the case of random initialization $\vec \theta_0$.   
For an expectation value over the distribution $\mathcal{D}_{\vec \theta}$, we write $\edis{\argdot}$.
In this framework, taking the expectation value for $\mathcal{D}_{\vec \theta}$ yields
\begin{align}
\begin{aligned}
        \edistheta{\esys^2}&=\av*{\left[(\S\c-\vec \mu)^T \vec a\right]^2}_{\vec \theta}\\
    &=(\S\c-\vec \mu)^T \edistheta{\vec a\vec a^T}(\S\c-\vec \mu)\\
    \label{eq:toterr_dis}
    &\eqqcolon(\S\c-\vec \mu)^T C_a(\S\c-\vec \mu)\,,
\end{aligned}
\end{align}
meaning that the estimation of the expected squared systematic error requires knowledge of the second moment matrix 
\begin{equation}\label{eq:Ca}
   C_a\coloneqq\edis{\vec a\vec a^T}\in \RR^{n_{\mu}\times n_{\mu}}\,.
\end{equation} 
In the following, we derive properties of the of $C_a$ assuming the uniform distribution over $\vec \theta$. 

First, we note that for a shift $\vec\theta \rightarrow \vec\theta +\vec e_l z$ in the layer~$l$, the complex Fourier coefficients transform as $c_k\rightarrow c_k\e^{\i \mu_k z}$, see \cref{eq:sincosform}. 
As $\mathcal{D}_{\vec\theta}$ is invariant under such a shift, we have 
\begin{align*}
    \edistheta{c_k}
    =\frac{1}{2\pi}\int_{-\pi}^\pi \edistheta{c_k}\mathrm d z
    =\frac{1}{2\pi}\int_{-\pi}^\pi \edistheta{c_k \e^{\i\mu_k z}}\mathrm d z
    =0
\end{align*}
implying
that $c_k$ is centered around $c_k=0$ in expectation over $\vec \theta$. 
Similarly for the second moment, we compute
\begin{align*}
    \edistheta{c_kc_p}&=\frac{1}{2\pi}\int_{-\pi}^\pi \edis{c_k c_p}\e^{\i(\mu_k+\mu_p) z}\mathrm d z=0\,,\\
    \edistheta{c_k^*c_p}&=\frac{1}{2\pi}\int_{-\pi}^\pi \edis{c_k^* c_p}\e^{\i(\mu_k-\mu_p) z}\mathrm d z=\edis{|c_k|^2}\delta_{kp}\,,
\end{align*}
where $\delta_{kp}$ 
is 
the Kronecker delta.
From the Fourier expansion~\cref{eq:sincosform} and 
$a_k=2\Im(c_k)$, 
it follows that
\begin{align}
\begin{aligned}
    \edistheta{a_k}&=0\\
    \edistheta{a_ka_p}
    &=2\,\delta_{kp}\edis{|c_k|^2}\, ,
\end{aligned}
\end{align}
meaning that $C_a$ 
is 
a diagonal matrix with entries 
$\sigakt=2\edis{\abs{c_k}^2}$ 
given by
the expected squares of the Fourier coefficients.

What remains is determining $\sigakt$. It is worth pointing out that while underestimating $\sigakt$ can lead to suboptimal results, even significantly overestimating the amplitudes will still outperform methods, where no prior assumptions are made, meaning rough estimates of $\sigakt$ are already sufficient for good performance.
One way of estimating them is to use already existing empirical measurement data from previous optimization rounds or initial calibration. 
The coefficients can be estimated using a Fourier fit.
Another strategy involves numerically simulating smaller system sizes and extrapolating to the actual size used in the \ac{VQA}.

If the applied unitaries in the \ac{VQA} are known, $\sigakt$ can sometimes be derived theoretically.
For instance in \cref{ap:l1_est}, we derive analytically exact results for a \ac{VQA} with a single layer ($L=1$).
For a small constant circuit depth, $\sigakt$ can be computed efficiently, using Monte-Carlo sampling algorithms, even for large system sizes.
This is convenient, as the case for deep circuits, under certain assumptions, $\sigakt$ can be approximated again using only the spectral composition of the generator. This is shown in the following.

\subsubsection*{Ergodic limit -- barren plateaus}
\ac{VQA} optimization routines have to overcome a general phenomenon known as \emph{barren plateaus}.
This term describes the tendency of a gradient in \acp{VQA} to be exponentially suppressed in the system size with increasing circuit depth and for almost all parameters $\vec \theta$. 
This phenomenon has been extensively studied and while mitigation techniques are proposed~\cite{Zhou2018QuantumApproximate,Grimsley19AnAdaptiveVariational,Grimsley22ADAPT-VQE}, 
it appears to be unavoidable, at least in the general setup. 

For the rigorous analysis of barren plateaus, it is beneficial to use the language of unitary $t$-designs. 
Effectively, with increasing circuit depth, the overall applied gate will appear more and more like a Haar-random unitary, with respect to which the derivative is suppressed by the Hilbert space dimension. 
This assumes that the underlying generators describe a universal gate set.
For this effect to occur, we do not need convergence to the Haar measure but convergence to a unitary $2$-design, which is significantly quicker~\cite{harrow_random_2009,brandao_local_2016}. 
For our purposes, such approximate unitary $2$-designs are sufficient since $\sigakt$ can be expressed as a polynomial in $U,U^\dagger$ of degree $(2,2)$. 
If this condition is met, we can replace the expectation value over all angles by the expectation value over all unitaries. 
Hence, this condition can be summarized by the \emph{ergodic assumption}
\begin{align}\label{eq:ergodic_argument}
    \edistheta{\Gamma(U(\vec \theta))}
    = 
    \int \Gamma(U)\mathrm{d}U\, ,
\end{align}
which holds for any polynomial $\Gamma(U)$ of degree at most $(2,2)$ and where the integral is taken w.r.t.\ the Haar probability measure on the unitary group. 

Under this assumption we derive in \cref{ap:2design} that 
\begin{align}\label{eq:sigakt_2design}
    \sigakt&=\xi_d\frac{\sigma_O^2}d\sum_{i\geq j:\mu_k=\lambda_i-\lambda_j}\Tr[P_i/d]\Tr[P_j/d]
\end{align}
with $\sigma_O^2\coloneqq \Tr[O^2/d]-\Tr[O/d]^2$, where $d$ is the Hilbert space dimension and $\xi_d$ is a constant close to $1$ that depends only on $d$. 
$\Tr[P_i/d]$ is the relative multiplicity of the eigenvalue $\lambda_i$.  
Notably, the factor of $\frac{1}{d}$ shows the exponential suppression of the derivative in the system size, meaning that gate sets drawn from a 2-design experience barren plateaus. 
For $L\rightarrow\infty$ this result confirms the assumption that the relative frequency of an eigenvalue difference $\mu_k$ in the spectrum of the generator determines the expected size of its respective Fourier coefficient.
We will analyze the strengths and limitations of this approach with an example in \cref{sec:ca}.

\section{Allocation methods}
In this section, we derive several allocation methods for a given measurement budget. 
A Python implementation of these methods is available on GitHub \cite{Bittel22GitHub}. 

The estimation algorithm requires the values $\c,\vec m$ and $\vec x$. 
We have already seen that making assumptions about the ensemble of configurations allows us to estimate the error using the second-moment matrix $C_a$ from \cref{eq:Ca} 
and an a priori shot noise estimate $\sigma^2$. 
In the following, 
we devise explicit measurement procedures by making use of the knowledge of these quantities. 
In \cref{sec:full}, we show that using convex optimization procedures, one can derive an optimal measurement strategy, which we call \ac{BLGE}. 

In \cref{sec:unbiased}, we then consider the case, when the number of total measurements goes to infinity. 
We call this method \ac{ULGE}, as it does not require access to the estimates of $C_a$ and $\sigma^2$ and yields an estimate without systematic error.
\ac{ULGE} is an equivalent formulation of a known method in literature~\cite{Wierichs21GeneralParameter-Shift}. 
In \cref{sec:PSR}, we show strong similarity between \ac{ULGE} and another popular generalized \ac{PSR} found in the literature~\cite{Schuld19EvaluatingAnalyticGradients}.
Finally, in \cref{sec:single}, we restrict the number of measurement positions $n_x$ to 1 and derive a strategy that is optimal under this constraint.
We call this method \ac{SLGE}. 
We note that this solution coincides with the result obtained for the non-restricted problem if only very few total measurements are available for the gradient estimation. 
\newcommand{\slvar}{\kappa}
\subsection{\texorpdfstring{\acl{BLGE}}{Bayesian linear gradient estimator}}
\label{sec:full}
For the linear estimator of a partial derivative 
we want to minimize the expected squared error $\edisboth{\epstot^2}$ by finding suitable measurement positions $\vec x\in [0,\pi)^{n_x}$ with weights $\c\in\RR^{n_x}$ of our linear estimator, i.e.\ we wish to find the optimal solution of 
\begin{align}\label{eq:primalop}
(\c^*, \vec x^*) = \argmin_{\c, \vec x}\left\{\edisboth{\epstot^2}(\c, \vec x)\right\}
\end{align}
with
\begin{align*}
    \edisboth{\epstot^2}
    &=(\S\c-\vec \mu)^T C_a(\S\c-\vec \mu)+\frac{\sigma^2}{m}\onenorm*{\c}^2\\
    &=\sum_{k=1}^{n_\mu} \sigakt \left(\sum_{i=1}^{n_x} \ci \sin(\mu_k x_i)\right)^2+\frac{\sigma^2}{m}\onenorm*{\c}^2
\end{align*}
 from \cref{eq:toterr_dis} and \cref{eq:estat_uni}. 
Since the cost function is non-convex in $x$, a direct approach may not reach the optimal solution. 
In \cref{sec:optimization} we show that instead, we can be perform an equivalent maximization problem which arises as and effective dual problem after adding certain constraints:
\begin{align}\label{eq:dualop}
    \edisboth{\epstot^2}^*=\max_{\vec \slvar\in \RR^{n_\mu}} g(\vec \slvar)
    \end{align}
with
\begin{align*}
g(\vec \slvar)=\sum_{k=1}^{n_\mu}\left(2\slvar_k \mu_k-  \frac{\slvar_k^2}{\sigakt}\right)-\frac{m}{\sigma^2} \norm[\Big]{\sum_{k=1}^{n_\mu} \slvar_k \sin(\mu_k (\argdot))}_{\infty}^2\nonumber
\end{align*}
where the last term refers to the $L^\infty$-space norm. 
Each dual variable
$\{\slvar_k\}$ 
has the interpretation as the systematic error w.r.t.\ only one frequency component
\begin{align}\label{eq:err_sys_p}
    \edis{\epsilon^2_{\mathrm{sys},k}}=\left(\left(\S\c\right)_k - \mu_k\right)^2 \sigakt=\frac{\slvar_k^2}{\sigakt}\,.
\end{align}
Due to complementary slackness between the primal and the dual problem, the global maxima positions of the function 
\begin{align}
    \rho_{\vec{\slvar}}(x)=\abs[\Big]{\sum_{k=1}^{n_\mu} \slvar_k \sin(\mu_k x)}
    \label{eq:lam_extrema}
\end{align} 
are the set of optimal measurement positions  $\vec x^*$. 
Having determined those, we can proceed determining the weights $\c^*$ by solving the convex problem~\cref{eq:primalop} with the fixed positions $\vec x^*$
and obtain the measurement budget $\vec m$ via \cref{eq:estat_uni}. We also allow for basic post-processing, where after the measurements are performed, we obtain updated weights $\c$  by replacing the statistical error in \cref{eq:primalop} by one using the empirically determined shot-noise variances. 
\begin{align}\label{eq:estst_post}
    \estat^2=\sum_i \ci^2\sigma_{\mathrm{emp},i}^2
\end{align}
where $\vec \sigma_{\mathrm{emp}}^2$ refers to the empirical estimate of the variance.
All the steps are shown in \cref{alg:bay}, also including the two other methods we consider in the following section.

\begin{figure}
\begin{algorithm}[H]
\caption{Gradient estimation\label{alg:bay}}

\begin{algorithmic}
\State 
{\small Allocator and estimator procedure for the three outlined strategies. $\mathcal{M}_{m}(x)$ refers to performing physical measurement at position $x$ using $m$ measurement rounds and estimating the expectation value and variance.}
\Procedure{Allocator}{$\vec \mu$,$C_a$,$\sigma$,$m$}
\If{Bayesian}
    \State $\vec \slvar^*\leftarrow\argmax[g(\vec \slvar)]$ \Comment{\eqref{eq:dualop}} 
    \State $\vec x^*\leftarrow \argmax[\rho_{\vec \slvar^*}(\vec x)]$ \Comment{\eqref{eq:lam_extrema}}
    \State $\vec w^*\leftarrow \argmin\left[\edisboth{\epstot^2}(\c,\vec x^*)\right]$ \Comment{\eqref{eq:primalop}}
\EndIf
\If{Unbiased}
    \State $\vec x^*\leftarrow \frac{\pi}{\specwidth}([\nu]+\frac{1}{2})$
    \State $\vec w^*\leftarrow\frac{(-1)^i}{2\specwidth\sin^2\left(\frac{\pi}{2\specwidth}\left([\nu] + \frac{1}{2}\right)\right)}$ \Comment{\eqref{eq:ubiasedweight}}
\EndIf
\If{Single}
    \State $x^*\leftarrow \argmin\left[\edisboth{\epstot^2}(x)\right]$ \Comment{\eqref{eq:singleop}}
    \State $w^*\leftarrow w(x^*)$
\EndIf
\State $\vec m\leftarrow \mathrm{ROUND}\left( m\frac{\vec w}{2\norm{\vec w}_1}\right)$
\State \Return $\vec x^*, \c^*, \vec m$
\EndProcedure

\Procedure{Estimator}{$\vec x,\c, \vec m$}
\For{$i \in \{1,\dots, n_x\}$}
    \State $y_{(\mathrm{emp},i,+)}$, $\sigma_{(\mathrm{emp},i,-)}^2\leftarrow\mathcal{M}_{m_i}(x_i)$
    \State $y_{(\mathrm{emp},i,-)}$, $\sigma_{(\mathrm{emp},i,-)}^2\leftarrow\mathcal{M}_{m_i}(-x_i)$
    \State $y_{(\mathrm{emp},i)}\leftarrow\frac{y_{(\mathrm{emp},i,+)}-y_{(\mathrm{emp},i,-)}}{2}$
    \State $\sigma^2_{(\mathrm{emp},i)}\leftarrow\frac{\sigma_{(\mathrm{emp},i,+)}^2+\sigma_{(\mathrm{emp},i,-)}^2}{2}$
\EndFor
\If {postprocess}
\State $\c^*\leftarrow \argmin\left[\edistheta{\epstot^2}(\c,\vec x^*)|_{\vec\sigma_\mathrm{emp}^2}\right]$ \Comment{\eqref{eq:estst_post}}
\EndIf
\State $\hat{\deriv}\leftarrow \vec y_\mathrm{emp}^T\c^*$;
\State \Return $\hat{\deriv}$
\EndProcedure
\end{algorithmic}
\end{algorithm}
\end{figure}

\begin{figure}[ht]
    \centering
    \includegraphics[width=0.50\linewidth]{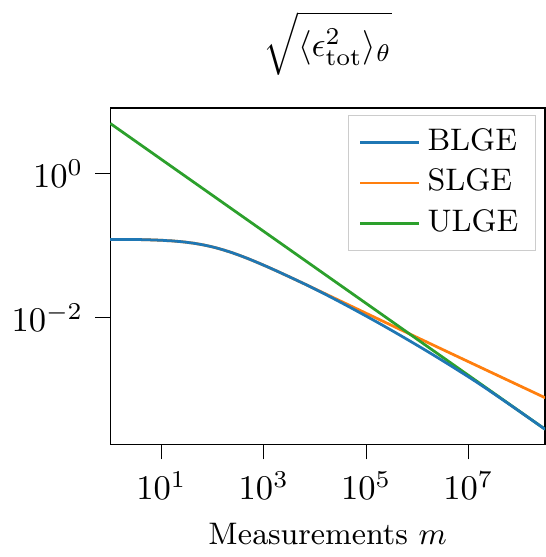}
    \includegraphics[width=0.484\linewidth]{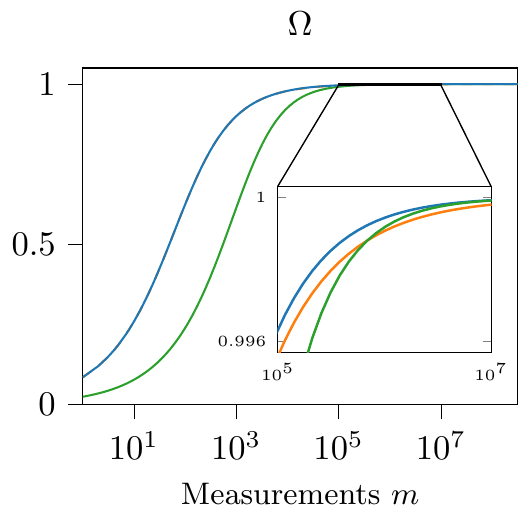}\\
    \vspace{10pt} 
    \includegraphics[width=1\linewidth]{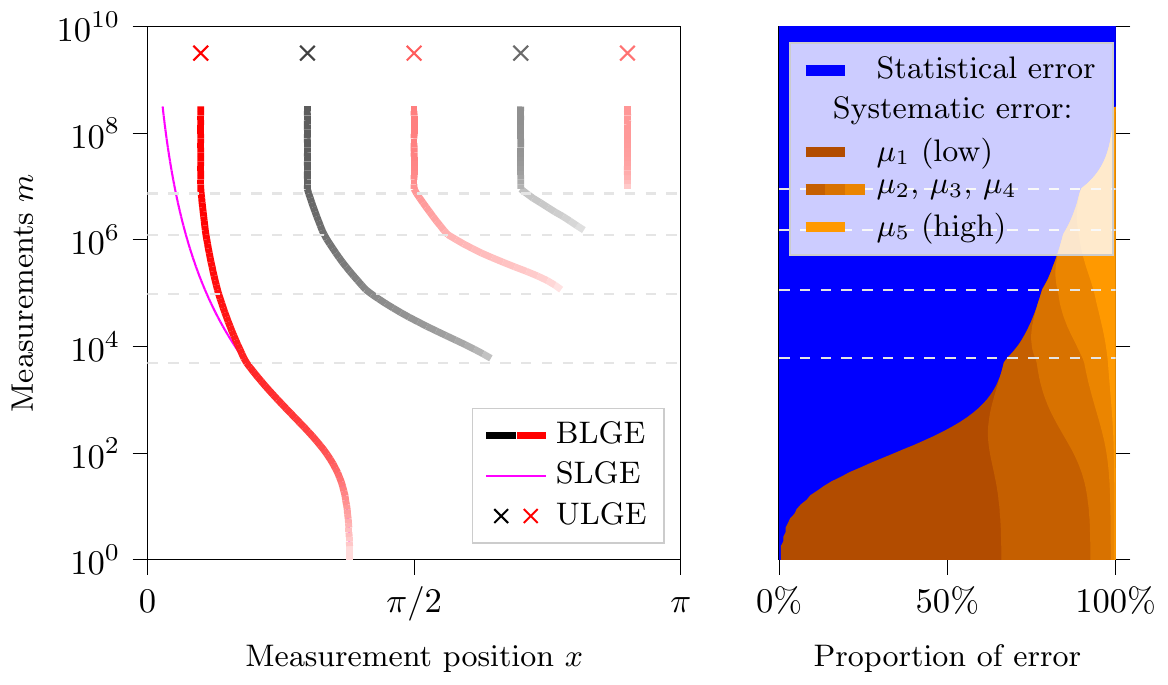}
    \caption{
    \textbf{Top left:} The theoretical mean error $\edisboth{\epstot^2}$ for different measurement budgets for all three methods.
    \textbf{Top right:} The theoretical \om\ $\Omega$ for all methods.
    \textbf{Left:} The measurement positions chosen by the different methods for given total measurement budgets. 
    The color of the lines indicates the sign of the coefficient $\ci$ applied to the value (red - positive; black - negative); the fainter the color, the lower the absolute value of the coefficient $\ci$. 
    \ac{ULGE} is only shown once, since its allocation is independent of $m$. 
    \textbf{Right:} The proportion of the total error that is due to the statistical error (blue) or the systematic error (shades of orange) in the case of the \ac{BLGE}.
    Used values throughout: $\vec \mu = (1,2,3,4,5)$, $\sigma^2=1$, $\edis{a_k^2}=0.1\times10^{-\mu_k}$.
    \label{fig:measpos}
    }
\end{figure}
We also show that the number of measurement settings is restricted by the number of frequency differences in the generator spectrum.
\begin{theorem}\label{thm:finite_positions}
For every optimization problem as defined in \cref{eq:primalop}, there exists an optimal \ac{BLGE} strategy that requires at most $n_x=n_\mu$ unique (positive) measurement positions. (Meaning $2n_\mu$ positions in total)
\end{theorem}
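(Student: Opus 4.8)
The plan is to recognize \cref{thm:finite_positions} as a Carathéodory-type statement and to prove it by an explicit support-reduction procedure applied to any optimal solution. The starting observation is that the objective $\edisboth{\epstot^2}$ depends on the pair $(\c,\vec x)$ only through two quantities: the vector $\S\c\in\RR^{n_\mu}$, which fixes the systematic term $(\S\c-\vec\mu)^T C_a(\S\c-\vec\mu)$, and the scalar $\onenorm{\c}$, which fixes the statistical term $\tfrac{\sigma^2}{m}\onenorm{\c}^2$. Hence, starting from any optimal solution $(\c^*,\vec x^*)$ of \cref{eq:primalop}, it suffices to produce another feasible solution that leaves $\S\c$ unchanged and does not increase $\onenorm{\c}$ while using at most $n_\mu$ positions: such a solution has objective value no larger than the optimum, and is therefore optimal as well.

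Writing $\vec s_i\coloneqq(\sin(\mu_k x_i^*))_{k=1}^{n_\mu}\in\RR^{n_\mu}$ for the columns of $\S$, I would first discard every position with $\ci^*=0$, so that all remaining weights are nonzero. If the number of remaining positions still satisfies $n_x>n_\mu$, the vectors $\{\vec s_i\}$ are linearly dependent in $\RR^{n_\mu}$, so there is $\vec\beta\neq 0$ with $\sum_i \beta_i\vec s_i=0$, i.e.\ $\S\vec\beta=0$. I then consider the one-parameter family $\c(t)=\c^*+t\vec\beta$. Because $\S\vec\beta=0$, the vector $\S\c(t)=\S\c^*$ is constant, so the systematic error is unaffected for every $t$, and only the statistical term can change.

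The crux is therefore controlling $\onenorm{\c(t)}=\sum_i|\ci^*+t\beta_i|$. Since every $\ci^*\neq 0$, this is a piecewise-linear convex function of $t$ that is \emph{linear} near $t=0$ with slope $\sum_i\beta_i\sign(\ci^*)$; after replacing $\vec\beta$ by $-\vec\beta$ if necessary, this slope is $\leq 0$. Increasing $t$ from $0$, the function stays linear, hence non-increasing, until its first breakpoint $t_*>0$, which by construction is a value at which some coefficient $\ci^*+t_*\beta_i$ vanishes. Such a finite $t_*$ must exist: because $\vec\beta\neq 0$ we have $\onenorm{\c(t)}\to\infty$ as $t\to\infty$, so a piecewise-linear function with initial slope $\leq 0$ is forced to have an increasing breakpoint at some finite $t_*>0$. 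At $t=t_*$ we then have $\onenorm{\c(t_*)}\leq\onenorm{\c^*}$ together with at least one additional zero weight, so discarding that position strictly reduces $n_x$ without increasing the objective.

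Iterating this elimination, each step removes at least one position while preserving optimality, and the linear-dependence guarantee remains available as long as $n_x>n_\mu$; the process therefore terminates with an optimal strategy using at most $n_x=n_\mu$ distinct positive positions, as claimed. I expect the only delicate point to be the bookkeeping for the statistical term: one must use that full support makes $\onenorm{\c(t)}$ locally \emph{linear} (not merely convex), so that moving toward the first sign change cannot increase it, and that such a sign change always occurs at finite $t$ in the chosen direction. Everything else is the standard dimension count underlying Carathéodory's theorem, here adapted to signed combinations of the sine-feature vectors $\vec s_i$.
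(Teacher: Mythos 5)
Your proposal is correct and follows essentially the same route as the paper: both arguments fix the value of $\S\c$ at the optimum so that only the $\onenorm{\c}$ term remains to be controlled, reducing the claim to the existence of a sparse minimizer of an $\ell_1$-norm objective under $n_\mu$ linear constraints. The only difference is that the paper simply cites this as the standard basic-feasible-solution fact from linear programming, whereas you supply its proof explicitly via the kernel-direction support-reduction argument (including the correct observation that full support makes $\onenorm{\c(t)}$ locally linear and that coercivity forces a finite first breakpoint), which makes your version self-contained but not substantively different.
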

In \cref{ap:proof_of_sparse} we show that this follows directly from the existence of a sparse solution for an $\ell_1$-norm minimization with a linear constraint.

As a quality measure, we define the \emph{\om}\ $\Omega$ as
\begin{align}
    \label{eq:omega}
    \Omega^2\coloneqq\frac{\edisboth{\hat \delta \delta }^2}{\edistheta{\delta^2\vphantom{\hat \delta^2}}\edisboth{{\hat \delta^2}}}\,,
\end{align}
which is the covariance between the real and the estimated derivative scaled onto the interval $\Omega \in [0,1]$. 
Using the quadratic inequality we can find a relationship between the error and the \om
\begin{align}
\begin{aligned}
    \edisboth{\epstot^2}&=\edisboth{\delta^2\vphantom{\hat \delta^2}}+\edisboth{\hat \delta^2}-2\edisboth{\hat \delta \delta }\\
        &\geq\edisboth{\delta^2\vphantom{\hat \delta^2}}-\frac{\edisboth{\hat \delta \delta }^2}{\edisboth{\hat \delta^2}}\\
    \label{eq:eps_to_omega}
    &=\edisboth{\delta^2\vphantom{\hat \delta^2}}\left(1-\Omega^2\right)
\end{aligned}
\end{align}
where the lower bound is saturated by an optimal rescaling of $\c$.
This allows us to think of $\Omega$ as an expected relative accuracy.

\Cref{fig:measpos} shows the result of this method for $n_\mu=5$ and prior estimates $(\sigakt,\sigma^2)$ inspired by the model \ac{VQA} defined in \cref{sec:results}, where small frequencies contribute significantly more to the overall gradient than large frequencies.
We see that in the beginning, the error (blue line, top left) plateaus  and only starts dropping once more than $10^3$ measurement rounds are performed. This occurs as the method returns a very small guess to avoid being wrong, meaning the error is just the expected size of the derivative. For large measurement budgets, we have the expected $\edisboth{\epstot^2}\propto \frac{1}{m}$ behavior. Similarly, for the  \om\ plot (top right), $\Omega$ is increasing from a value close to $\Omega=0$ for very few measurement rounds to $\Omega=1$ for $m\sim 10^4$ measurement rounds.

The bottom left plot show the measurement positions used in the \ac{BLGE}.
The $y$-axis denotes the measurement budget that the allocation method has available.
We see that $n_\mu$ many positions only occur for very large measurement budgets ($10^7$), while only a single position ($2$ when also considering the negative measurement position) is returned for $m\leq10^4$ measurements. For increasing $m$, the measurement positions move further to the left with new ones being added on the right. 

On the bottom right plot, the composition of the error into statistical error (blue) as well as the different systenatic errors based on the different frequency components (shades of orange), as defined in~\cref{eq:err_sys_p}, are shown for different measurement budgets $m$.
For few measurements the largest contribution to the systematic error error dominates with most of the error coming from the Fourier coefficient $a_1$, as this is the most significant component of $f$. 
As $m$ increases, the statistical error starts becomes dominant while the systematic error, first the small frequencies, vanishes. This shows that as one might expect, the estimator becomes more unbiased, as the measurement budget increases.

In the next sections we analyze the behavior in the two limits, the central differences behavior in small $m$ and the unbiased equidistant measurement strategy for large $m$.

\subsection{\texorpdfstring{\Acl{ULGE}}{Unbiased linear gradient estimator}}
\label{sec:unbiased}
If we let $m\rightarrow\infty$, the method will use all resources to make the systematic error vanish exactly, meaning that 
the remaining minimization of the statistical error in~\cref{eq:primalop} simplifies to the convex optimization problem
\begin{align}
    \label{eq:conv_opt_unb}
	 \edisboth{\epstot^2}=\frac{\sigma^2}{m}\min_{\c:\, \S\c=\vec \mu} \onenorm*{\c}^2\,,
\end{align}
where we recall $\vec \mu\subset[\nu]$ with $\nu$ the spectral width. This method is equivalent to one without a systematic error, which has been studied before in more depth~\cite{Wierichs21GeneralParameter-Shift}. Its behavior can be summarized in the following theorem.
\begin{theorem}
For constant shot noise variance $\sigma^2$, any unbiased gradient estimation method given by \cref{eq:conv_opt_unb} has an error
$\edisboth{\epstot^2}\geq\frac{\sigma^2\specwidth^2}{m}$ and $\Omega^2\leq\frac{\edis{\delta^2}}{\edis{\delta^2}+ \frac{\specwidth^2\sigma^2}{m}}$, where tightness can be achieved with at most $n_x=n_\mu$ measurement positions.
\end{theorem}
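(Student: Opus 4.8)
The plan is to read \cref{eq:conv_opt_unb} as the pure $\ell_1$-minimization $\min\{\onenorm{\c} : \S\c=\vec\mu\}$, jointly over the weights $\c$ and the positions $\vec x$ hidden in $\S_{ki}=\sin(\mu_k x_i)$, and to split the statement into three pieces: the variance lower bound $\onenorm{\c}\ge\nu$, the induced ceiling on $\Omega$, and the attainability of the bound with few positions.

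First I would establish the lower bound, which is essentially immediate. Let $k^\star$ be an index realizing $\mu_{k^\star}=\nu=\snorm{\vec \mu}$, which exists because $\nu$ is by definition the largest frequency. The $k^\star$-th row of the feasibility constraint $\S\c=\vec\mu$ reads $\sum_i \ci\sin(\nu x_i)=\nu$, so using $\abs{\sin(\nu x_i)}\le 1$ and the triangle inequality,
\begin{equation*}
\nu=\Bigl\lvert\textstyle\sum_i \ci\sin(\nu x_i)\Bigr\rvert\le\sum_i\abs{\ci}=\onenorm{\c}.
\end{equation*}
Squaring and substituting into \cref{eq:conv_opt_unb} gives $\edisboth{\epstot^2}=\frac{\sigma^2}{m}\onenorm{\c}^2\ge\frac{\sigma^2\nu^2}{m}$.

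Second, the bound on $\Omega$ would follow from a bias--variance decomposition. Since the estimator is unbiased, $\sn{\hat\deriv}=\deriv$ for every $\vec\theta$, and with constant shot noise the conditional variance $V\coloneqq\frac{\sigma^2}{m}\onenorm{\c}^2$ does not depend on $\vec\theta$. Hence $\edisboth{\hat\deriv\,\deriv}=\edistheta{\sn{\hat\deriv}\,\deriv}=\edistheta{\deriv^2}$ and $\edisboth{\hat\deriv^2}=\edistheta{\deriv^2}+V$, so that the definition \cref{eq:omega} collapses to
\begin{equation*}
\Omega^2=\frac{\edistheta{\deriv^2}}{\edistheta{\deriv^2}+V}.
\end{equation*}
As this expression is monotonically decreasing in $V$ and $V\ge\frac{\sigma^2\nu^2}{m}$ by the first part, the claimed inequality for $\Omega^2$ drops out.

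It remains to prove tightness together with the bound $n_x\le n_\mu$, which I expect to be the crux. The equality case of the triangle inequality above dictates the construction: one must place all (support) positions where $\abs{\sin(\nu x_i)}=1$, i.e.\ on the grid $x_j=\frac{(2j-1)\pi}{2\nu}$, $j=1,\dots,\nu$, and arrange the signs so that every term $\ci\sin(\nu x_i)$ is positive. Concretely I would take the explicit equidistant weights of the \textsc{Unbiased} branch of \cref{alg:bay} (the known construction of Ref.~\cite{Wierichs21GeneralParameter-Shift}); these are feasible for the full frequency range $\{1,\dots,\nu\}\supseteq\{\mu_k\}$, and a direct computation reduces $\onenorm{\c}$ to $\tfrac{1}{2\nu}\sum_{j=1}^{\nu}\csc^2\!\bigl(\tfrac{(2j-1)\pi}{4\nu}\bigr)$, which equals $\nu$ by the identity $\sum_{j=1}^{\nu}\csc^2\!\bigl(\tfrac{(2j-1)\pi}{4\nu}\bigr)=2\nu^2$. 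This attains the lower bound but uses $\nu$ positions. To pass from $\nu$ to $n_\mu$ positions, I would invoke the same sparse-solution argument as in \cref{thm:finite_positions} (see \cref{ap:proof_of_sparse}): since feasibility is a system of only $n_\mu$ linear equations, there is an optimal basic solution supported on at most $n_\mu$ positions, with the objective value unchanged. The delicate point throughout is this attainability step---verifying that the optimal $\ell_1$ value is exactly the single-constraint bound $\nu$ (so that the lower-frequency constraints cost nothing in $\ell_1$ norm) and that the sparsification preserves both feasibility and optimality; the lower bound and the $\Omega$ estimate are comparatively routine once this is in place.
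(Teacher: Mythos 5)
Your proposal is correct and follows essentially the same route as the paper: the lower bound via the largest-frequency constraint and the triangle inequality, the $\Omega$ bound from unbiasedness, tightness via the equidistant DST-II/Dirichlet-kernel weights with $\onenorm{\c}=\nu$, and the reduction to $n_\mu$ positions by the sparse-solution argument of \cref{thm:finite_positions}. The only differences are presentational: you spell out the bias--variance decomposition for $\Omega$ and verify $\onenorm{\c}=\nu$ explicitly via the cosecant identity, where the paper simply states these can be checked.
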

\begin{proof}
As the method should be unbiased for all functions anti-symmetric function $f(x)$ with the allowed frequencies, we can choose $f(x)=\sin(\nu x)$ to find an upper-bound.: 
\begin{align}
    \delta=\nu=\sum_i \ci \sn{y_i}=\sum_i \ci f(x_i)\,.
\end{align} 
using $|f(x_i)|\leq 1$ and the triangle equality it follows that $\nu \leq \|\c\|_1$.
This means that the error for $m$ measurements is
\begin{align}
    \edisboth{\epstot^2}=\frac{\sigma^2}{m}\onenorm*{\c}^2 \geq
        \frac{\sigma^2\specwidth^2}{m}\,.
\end{align}
Since there is no systematic error we also obtain
\begin{align}
\begin{aligned}
    \Omega^2
    &=\frac{\edis{\delta^2}}{\edis{\delta^2}+ \frac{\specwidth^2\sigma^2}{m}}\\
    &\overset{m\rightarrow 0}{=} \frac{m\edis{\delta^2}}{\sigma^2}\times \frac{1}{\specwidth^2}+O(m^2)\,.
\end{aligned}
\end{align}
What remains to show is that there exists a closed form solution for \cref{eq:conv_opt_unb} which reaches this bound. For this we choose the measurement positions $x_i=\frac{\pi}{\specwidth}(i+\frac{1}{2})$ with $i\in\{0,\dots,\specwidth-1\}$ yielding
\begin{align}
	S_{ik}=\sin\left(\frac{\pi}{\specwidth}\left(i+\frac{1}{2}\right)\mu_k\right)\,.
\end{align}
This describes the discrete sine transform (DST-II), which by inversion yields $\c=S^{-1}\vec \mu$ 
\begin{align}\label{eq:ubiasedweight}
    \ci
       &= \frac{(-1)^i}{2\specwidth\sin^2\left(\frac{\pi}{2\specwidth}\left(i + \frac{1}{2}\right)\right)}\,.
\end{align}
We note that these coefficients 
were
already derived in literature~\cite{Wierichs21GeneralParameter-Shift} using Dirichlet kernels. 
One can verify that $\|\c\|_{1}=\specwidth$, i.e.\ that our choice of coefficients saturates the lower bound we derived.
We note that while this closed form solution requires $\specwidth$ many measurement positions, since it is in the limit of the general Bayesian method, there always exists a strategy with at most $n_{\mu}$ many positions as shown in \cref{thm:finite_positions}.
\end{proof}
As can be seen in \cref{fig:measpos}, especially for small $m$, \ac{BLGE} performs significantly better than \ac{ULGE} for both the expected error and $\Omega$. This is because \ac{ULGE}, in order to be unbiased, is very dependent on shot noise, which leads to very significant statistical errors.
\subsubsection{Comparison with \texorpdfstring{\ac{PSR}}{PSR} for sums of commuting 2-level generators}
\label{sec:PSR}

Even though the \ac{PSR} in its simplest form is only valid for unitaries with two-level generators, it is straightforward to extend it to generators $H$ which are the sum of commuting two-level generators $H_i$, i.e.
\begin{align}
    H = \sum_{i=1}^{n_\prefac} \prefac_i H_i\,,
\end{align}
where with without loss of generality the eigenvalues of all $H_i$ are $\lambda \in \{0,1\}$ and $n_\prefac$ is the number of generators.

As was shown by Ref.~\cite{Schuld19EvaluatingAnalyticGradients}, \ac{PSR}~(\cref{eq:psr}) together with the product rule of differentiation yields an unbiased estimate of the derivative
\begin{align}
    \delta= \sum_{i=1}^{n_\prefac} \prefac_i\frac{F_{i|+\frac{\pi}{2\prefac_i}}-F_{i|-\frac{\pi}{2\prefac_i}}}{2}\, ,
\end{align}
where $F_{i|+x}$ refers to applying an additional unitary $\e^{\i x\prefac_iH_i}$ during state preparation. 

While these operations may create quantum states outside the ansatz class of the \ac{VQA}, most physical implementations can facilitate them. 
In total $2n_\prefac$ expressions are evaluated. 
If we assume shot noise that is uniform over the parameter space with a variance given by $\sigma_{i\pm}^2=\frac{\sigma^2}{m_{i\pm}}$, where $m_{i\pm}$ is the number of measurements at one position with the total number of measurements $m=\sum_{i=1}^{n_\mu} m_{i+} + m_{i-}$, the optimal choice of measurement distribution is such that $2m_{i+} = 2m_{i-} \coloneqq m_i \propto |\prefac_i|$.
This yields a total error of
\begin{align}
\begin{aligned}
    \edisboth{\epstot^2}=&\sum_{i=1}^{n_\prefac} \frac{\prefac_i^2}{4} \left(2\frac{\sigma^2}{m_{i}/2}\right)\\ \overset{m_i\propto|\prefac_i|}{\longrightarrow}& \frac{\sigma^2\|\vec \prefac\|_{1}^2}{m}\gtrsim \frac{\sigma^2\specwidth^2}{m}\,,
\end{aligned}
\end{align}
where the second step optimizes over the measurement budget $\vec m$ and $\|\vec \prefac\|_{1}$ is an upper bound to the spectral width $\specwidth^2$.
This is the same scaling as for \ac{ULGE}, which is also true for $\Omega$, as this method is also unbiased.

\begin{table*}[ht]
    \def\arraystretch{1.5}
	\begin{tabular}{l||c|c|c|c|c|c}
		&$\edisboth{\epstot^2}$ ($m\rightarrow\infty$) &$\Omega^2$ ($m\rightarrow0 $)& $\sn{\estat^2}/\edisboth{\epstot^2}$ & \#Meas. Pos. $(2n_x)$  & AC? & Priors?  \\\hline\hline
		\ac{BLGE} & $\frac{\sigma^2\specwidth^2}{m}$ &$\frac{m\edistheta{{\delta^2}}}{\sigma^2}\times \frac{1}{\specwidth_{\mathrm{eff}}^2}$& $0 \rightarrow 1$ & $2 \rightarrow 2n_{\mu}$  & Yes &Yes  \\
		\ac{SLGE} & $\propto {m}^{-2/3}$ &$\frac{m\edistheta{{\delta^2}}}{\sigma^2}\times \frac{1}{\nu_{\mathrm{eff}}^2}$& $0 \rightarrow \frac{2}{3}$ & $2$ & Yes &Yes \\
		
		\ac{ULGE} & $\frac{\sigma^2\specwidth^2}{m}$ &$\frac{m\edistheta{{\delta^2}}}{\sigma^2}\times \frac{1}{\specwidth^2}$ &1 & $2n_{\mu}$  & Yes & No \\
		\ac{PSR} & $\frac{\sigma^2\specwidth^2}{m}$ &$\frac{m\edistheta{{\delta^2}}}{\sigma^2}\times \frac{1}{\specwidth^2}$&1 & $2n_{\prefac}$  & No  & No\\
	\end{tabular}
	\caption{Summary of the analyzed methods. 
	The total error refers to the limit of many measurements, while the expression for $\Omega$ is valid in the setting of very few measurements. 
	The notation $a\rightarrow b$ indicates that the value $a$ is valid for small measurement budgets and $b$ is valid for large measurement budgets. 
	AC: Ansatz Class - ``yes'' indicates that all expressions to be evaluated are in the original ansatz class of the \ac{VQA}.
  }
	\label{tab:sum}
\end{table*}

\subsection{\texorpdfstring{\Acl{SLGE}}{Single linear gradient estimator}}
\label{sec:single}
For $m\rightarrow 0$, $\c$ will be chosen small to minimize the statistical error. As the $\ell_1$ norm term heavily penalizes multiple measurements, this leads to a strategy with only a single measurement position method, similar to finite differences. The method where the measurement settings is restricted to $n_x=1$,we call \ac{SLGE}.
The optimization problem~\cref{eq:primalop} for \ac{SLGE} simplifies to
\begin{align}\label{eq:singleop}
	\edisboth{\epstot^2}(\cscalar,x) 
	&= \sum_{k=1}^{n_\mu}\sigakt(\cscalar\sin(\mu_k x)-\mu_k)^2+\frac{\sigma^2}{m}\cscalar^2 \,,
	\end{align}
which describes a quadratic polynomial in $\cscalar$ and a trigonometric polynomial in $x$. which can be minimized numerically exact by finding roots of a polynomial of degree $3\nu$.
We show in \cref{ap:single_err}, that for $m\rightarrow \infty$, \ac{SLGE} scales as
\begin{align}\label{eq:23scaling}
    \edisboth{\epsilon^2}\propto m^{-2/3}\,,
\end{align}
which is outperformed by the $\propto m^{-1}$ scaling of the previous methods. 
This shows that for the best performance for a massive measurement budget, multiple measurement positions are required. 
In contrast for a small measurement budget, we derive the following theorem in \cref{ap:single_om}
\begin{theorem}\label{thm:single_small_m}
For an optimal \ac{SLGE} (and therefore \ac{BLGE}) strategy, in the limit of small $m$, the \om\ ($\Omega$) can be lower-bounded by
\begin{align}
    \Omega^2 &\geq 
    \frac{m\edistheta{{\delta^2}}}{\sigma^2}\times \frac{1}{\specwidth_{\mathrm{eff}}^2} +O(m^2)
\end{align}
with the effective spectral width \begin{align}
    \nu_{\mathrm{eff}}\coloneqq\sqrt{\edis{\Delta^2}/\edis{\delta^2}}\in[1,\nu]\,,
\end{align} the expected ratio between the second derivative $\Delta$ and first derivative $\delta$ of $F(x)$.
\end{theorem}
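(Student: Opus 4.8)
The plan is to turn the one‑position objective into an explicit scalar optimization over the single shift $x$, read off its small‑$m$ behaviour, and then reduce the claim to a purely analytic inequality about trigonometric polynomials with nonnegative coefficients.

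First I would make the \ac{SLGE} objective \cref{eq:singleop} explicit. Writing $g(x)\coloneqq\sum_k\sigakt\mu_k\sin(\mu_kx)$ and $h(x)\coloneqq\sum_k\sigakt\sin^2(\mu_kx)$, expanding \cref{eq:singleop} gives $\edisboth{\epstot^2}(\cscalar,x)=\cscalar^2\bigl(h(x)+\sigma^2/m\bigr)-2\cscalar\,g(x)+\edistheta{\delta^2}$, where $\edistheta{\delta^2}=\sum_k\sigakt\mu_k^2$ follows from the diagonal second‑moment structure $\edistheta{a_ka_p}=\delta_{kp}\sigakt$ established for $C_a$ in \cref{eq:Ca}. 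Minimising over the scalar weight $\cscalar$ gives $\cscalar^\star(x)=g(x)/\bigl(h(x)+\sigma^2/m\bigr)$ and residual error $\edistheta{\delta^2}-g(x)^2/\bigl(h(x)+\sigma^2/m\bigr)$. Feeding this into the error–correlation identity \cref{eq:eps_to_omega}, which is saturated precisely at this optimal rescaling, yields
\begin{align}
    \Omega^2=\max_x\frac{g(x)^2}{\edistheta{\delta^2}\,\bigl(h(x)+\sigma^2/m\bigr)}\,.
\end{align}

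Second, I would identify the two moments controlling the bound. From $\delta=\sum_k\mu_ka_k$ and $\Delta=F''(0)=-\sum_k\mu_k^2b_k$, together with $\edistheta{a_ka_p}=\edistheta{b_kb_p}=\delta_{kp}\sigakt$ (the $b_k$ relation following by the same shift‑invariance argument used for the $a_k$), one gets $\edistheta{\delta^2}=\sum_k\sigakt\mu_k^2$ and $\edistheta{\Delta^2}=\sum_k\sigakt\mu_k^4$. Hence $\nu_{\mathrm{eff}}^2=\edistheta{\Delta^2}/\edistheta{\delta^2}$ is a $\sigakt\mu_k^2$‑weighted mean of the $\mu_k^2$, so $\nu_{\mathrm{eff}}\in[1,\nu]$ because the frequencies are positive integers bounded by $\nu$. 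Taking $m\to 0$, the bounded $h(x)$ is dominated by $\sigma^2/m$, so the maximiser reduces to $x^\star=\argmax_x g(x)^2$ and $\Omega^2=\tfrac{m}{\sigma^2\edistheta{\delta^2}}\max_x g(x)^2+O(m^2)$. Comparing with the target, the theorem is therefore \emph{equivalent} to the analytic inequality $\max_x g(x)^2\ge\edistheta{\delta^2}^{2}/\nu_{\mathrm{eff}}^{2}=\edistheta{\delta^2}^{3}/\edistheta{\Delta^2}$.

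Finally I would prove this inequality. Since every coefficient $\sigakt\mu_k\ge 0$ and $\sin u\ge u-u^3/6$ for $u\ge 0$, one has the exact minorant $g(x)\ge\edistheta{\delta^2}\,x-\tfrac16\edistheta{\Delta^2}\,x^3$ for $x\ge 0$; optimising the right‑hand side reproduces the claimed form and pins the relevant scale $x^\star\sim 1/\nu_{\mathrm{eff}}$. I expect the main obstacle to be the \emph{sharp} constant: this elementary minorant only delivers the bound up to a universal factor, whereas the single‑frequency configuration $g(x)=\sigakt\mu\sin(\mu x)$ with $\mu=\nu_{\mathrm{eff}}$ saturates $\max_x g(x)^2=\edistheta{\delta^2}^{2}/\nu_{\mathrm{eff}}^{2}$ exactly. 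The crux is thus to show that this single‑frequency case is the extremal (peak‑minimising) configuration among all nonnegative spectra with fixed moments $\edistheta{\delta^2}$ and $\edistheta{\Delta^2}$ — i.e.\ that spreading weight across several frequencies can only raise the peak of $g$. I would certify this with a moment/Chebyshev‑system argument, exhibiting a nonnegative averaging kernel whose sine transform is minorised within $\linspan\{\mu,\mu^3\}$ with tangency at $\mu=\nu_{\mathrm{eff}}$, which forces $\max_x g(x)\ge\edistheta{\delta^2}/\nu_{\mathrm{eff}}$ for every admissible spectrum. The \ac{BLGE} statement is then immediate, since \ac{BLGE} optimises over strictly more positions and hence attains at least the same $\Omega$.
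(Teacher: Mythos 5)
Your reduction coincides with the paper's: both routes express the single\nobreakdash-position correlation as $\Omega^2(x)=\expvar{\mu\sin(\mu x)}^2/\bigl(\expvar{\mu^2}(\expvar{\sin^2(\mu x)}+\sigma^2/m)\bigr)$, identify $\edistheta{\delta^2}=\expvar{\mu^2}$ and $\edistheta{\Delta^2}=\expvar{\mu^4}$, and reduce the theorem to the sharp moment inequality $\max_{x\geq0}\expvar{\mu\sin(\mu x)}\geq\sqrt{\expvar{\mu^2}^3/\expvar{\mu^4}}$, which is exactly the paper's \cref{lem:mu_av}. The genuine difference is how that lemma is proved. The paper evaluates at $x^*=\pi/(2\nu_{\mathrm{eff}})$ and shows $\expvar{\mu\sin(\mu x)}\geq\sqrt{\expvar{\mu^2}^3/\expvar{\mu^4}}\sin(\nu_{\mathrm{eff}}x)$ on $[0,x^*]$ by a derivative comparison, which after normalization becomes $\sum_kp_k\cos(\mu_kx)\geq\cos(x)$ subject to $\sum_kp_k\mu_k^2=1$, settled by a Lagrange-multiplier argument showing $\vec\mu=\vec 1$ is the constrained minimizer. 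Your tangent-minorant idea is sound and arguably more elementary, but as written it is only a sketch; to close it, note that your ``averaging kernel'' degenerates to the point mass at the very same $x^*=\pi/(2\nu_{\mathrm{eff}})$, and the required minorization is the single scalar inequality $\sin(\pi u/2)\geq\tfrac32u-\tfrac12u^3$ for $u\geq0$ (tangent to second order at $u=1$). Substituting $u=\mu_k/\nu_{\mathrm{eff}}$ and summing against the nonnegative weights $\sigakt\mu_k$ gives $\expvar{\mu\sin(\mu x^*)}\geq\tfrac{3}{2\nu_{\mathrm{eff}}}\expvar{\mu^2}-\tfrac{1}{2\nu_{\mathrm{eff}}^3}\expvar{\mu^4}=\expvar{\mu^2}/\nu_{\mathrm{eff}}$, i.e.\ the sharp constant (your first attempt with $\sin u\geq u-u^3/6$ indeed only yields the factor $2\sqrt2/3\approx0.94$, as you anticipated). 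The scalar inequality itself is a one-variable calculus fact: the difference is concave and nonnegative at both endpoints on $[0,u_0]$ and convex with a zero-derivative minimum of value $0$ at $u=1$ on $[u_0,\infty)$, where $u_0\approx0.75$ is its unique inflection point. So your approach completes and trades the paper's multivariate constrained optimization for a pointwise bound on $\sin$; the remaining pieces ($\nu_{\mathrm{eff}}\in[1,\nu]$ as a weighted mean of integer $\mu_k^2$, the transfer to \ac{BLGE}) are handled identically in both.
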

This theorem shows the strength of the strategy, as $\nu_{\mathrm{eff}}$ also takes into account the relative significance of the individual eigenvalue differences.  Notably, this can be significantly smaller than the spectral width when large eigenvalues are very rare, as is the case in an exponential or Gaussian like distributions with a long but negligible tail.
Thus we expect that in the regime of small $m$, \ac{PSR} and \ac{ULGE} require a factor of $\frac{\nu^2}{\nu^2_{\mathrm{eff}}}$ more measurement rounds for the same quality.
For large $m$, one can ask when \ac{SLGE} will be outperformed by \ac{ULGE} type methods. In the example of used in \cref{fig:measpos}, the crossover occurred only after $\sim 10^5$ measurements and $\Omega\geq 0.996$, which is only visible on the top right plot $\Omega$ after extensive magnification. This means that the regime where \ac{ULGE} takes over is only for very large $m$ where the exact derivative is basically known already. This behavior is not specific to the selected priors, but actually holds for any prior distributions.
\begin{theorem}
\label{thm:single_is_good}
    For any distribution of frequency amplitudes $\mathcal{D}_{\vec \theta}$, single shot noise variance $\sigma^2$ and a measurement budget $m$, there exists an optimal \ac{SLGE} strategy with a \om\ of at least $99\%$ that of an unbiased method. (i.e.\ $\Omega_\mathrm{S}\geq 0.99\,\Omega_{\mathrm{U}}$)
\end{theorem}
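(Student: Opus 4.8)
The plan is to turn the statement into a single explicit scalar inequality and then certify it. Since the relative correlation $\Omega$ in \cref{eq:omega} is invariant under rescaling the estimator weight $\c$, for a one-position strategy only the position $x$ enters, and the optimal \ac{SLGE} attains $\Omega_\mathrm{S}^2=\max_{x\in[0,\pi)}\Omega_\mathrm{S}^2(x)$. Using that $C_a=\diag(\sigakt)$ (derived around \cref{eq:Ca}), evaluating the three moments in \cref{eq:omega} for a single position gives
\[
\Omega_\mathrm{S}^2(x)=\frac{\bigl(\sum_{k}\mu_k\sin(\mu_k x)\,\sigakt\bigr)^2}{\edis{\delta^2}\bigl(\sum_{k}\sin^2(\mu_k x)\,\sigakt+\sigma^2/m\bigr)},
\]
with $\edis{\delta^2}=\sum_k\mu_k^2\sigakt$, while the unbiased theorem gives $\Omega_\mathrm{U}^2=\edis{\delta^2}/(\edis{\delta^2}+\specwidth^2\sigma^2/m)$. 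Thus the claim is equivalent to $\max_x\Omega_\mathrm{S}^2(x)\ge 0.99^2\,\Omega_\mathrm{U}^2$, required uniformly over every spectrum $\{\mu_k\}\subset[\specwidth]$, every prior $\{\sigakt\}\ge0$, and every $m,\sigma^2$.

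First I would cut the parameter count: rescaling all $\sigakt$ by a factor acts on both formulas exactly as rescaling $\sigma^2/m$ inversely, so one may fix $\sigma^2/m=1$ and vary only the prior direction, after which the comparison depends on a single dimensionless scale. I would then split on $\Omega_\mathrm{U}$. In the few-measurement regime I invoke \cref{thm:single_small_m}, which gives $\Omega_\mathrm{S}^2\ge m\edis{\delta^2}/(\sigma^2\nu_{\mathrm{eff}}^2)\ge m\edis{\delta^2}/(\sigma^2\specwidth^2)$; since the unbiased theorem yields $\Omega_\mathrm{U}^2\to m\edis{\delta^2}/(\sigma^2\specwidth^2)$ in the same limit and $\nu_{\mathrm{eff}}\le\specwidth$, the ratio is already $\ge1$ there. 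In the many-measurement regime both correlations are near $1$, so the task reduces to controlling the deficit $\Omega_\mathrm{U}^2-\Omega_\mathrm{S}^2\le 0.0199\,\Omega_\mathrm{U}^2$; here I would test the single position at the scale $x\sim 1/\nu_{\mathrm{eff}}$ on which the equidistant (DST) positions of \ac{ULGE} concentrate, and show $\Omega_\mathrm{S}$ tracks $\Omega_\mathrm{U}$ up to the $m^{-2/3}$ versus $m^{-1}$ gap, which is negligible precisely where the errors are small.

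For the uniform statement I would argue that the worst-case prior is low-dimensional. Both $\Omega_\mathrm{U}^2$ and each $\Omega_\mathrm{S}^2(x)$ are ratios whose numerators and denominators are linear functionals of the weight vector $(\sigakt)$ (quadratic only through the $\max_x$), so the infimum of $\max_x\Omega_\mathrm{S}^2(x)/\Omega_\mathrm{U}^2$ over the cone of priors is attained at an extreme point supported on a few frequencies -- the same sparsity/complementary-slackness mechanism used for \cref{thm:finite_positions}, where only a couple of positions are active at the optimum. A single-frequency prior gives ratio exactly $1$ (\ac{PSR} is optimal for two-level generators), confirming the worst case needs at least two frequencies. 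This collapses the problem to a two- or three-frequency family governed by a handful of scalars, on which the infimum of the ratio is certified to be at least $0.99$ by a finite numerical optimization.

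The main obstacle is exactly this uniformity in the intermediate regime: rigorously justifying the reduction to the low-dimensional worst case (the $\max_x$ inside the ratio makes the objective only piecewise-smooth in the prior), and upgrading the asymptotic small-$m$ estimate of \cref{thm:single_small_m} into a bound valid at finite $m$ so that the two regimes meet with no gap. Once that is done, the constant $0.99$ is simply whatever the numerical minimization of the reduced few-frequency family returns.
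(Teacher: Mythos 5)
Your setup is right (the ratio $\Omega_\mathrm{S}^2/\Omega_\mathrm{U}^2$ expressed through the moments $\expvar{\mu\sin(\mu x)}$, $\expvar{\sin^2(\mu x)}$, $\expvar{\mu^2}$, and the reduction of $\sigma^2/m$ and the overall prior scale to a single dimensionless parameter), and your endgame (collapse to a low\nobreakdash-dimensional family and certify numerically) matches the paper's. But the mechanism you propose for the collapse is the missing piece, and the substitutes you offer do not close it. The regime splitting via \cref{thm:single_small_m} only gives $\Omega_\mathrm{S}^2\geq m\edis{\delta^2}/(\sigma^2\nu_{\mathrm{eff}}^2)+O(m^2)$, an asymptotic statement whose error term depends on the prior, so it cannot be glued to a large-$m$ asymptotic (whose onset also depends on the prior) into a bound that is uniform over all $\mathcal{D}_{\vec\theta}$ at every finite $m$ --- you acknowledge this yourself. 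The extreme-point argument is the weaker link: $\max_x \Omega_\mathrm{S}^2(x)/\Omega_\mathrm{U}^2$ is a pointwise maximum of ratios of quadratics in $(\sigakt)$, which is neither quasi-concave nor quasi-convex on the prior cone, so the infimum need not sit at a low-support extreme point; the sparsity mechanism behind \cref{thm:finite_positions} is an $\ell_1$/linear-program fact about the weights $\c$ and does not transfer to an optimization over priors.

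The idea that actually makes the proof work, and which your proposal lacks, is a moment-space reduction: after reweighting by $p_k\propto\sigakt\mu_k^2$ and rescaling so $\mu_k\in(0,1]$, the ratio depends on the prior only through the first and second moments of the single bounded random variable $\tau(\mu)=\sin(x\mu)/\mu$, which for $x\in[0,\pi/2]$ takes values in $[\sin x,\, x]$. For a fixed first moment, the second moment over distributions on a bounded interval is maximized by the two-point distribution on the endpoints, so the worst case over \emph{all} priors is exactly a one-parameter family $q\sin(x)+(1-q)x$ --- no regime splitting, no asymptotics, and no appeal to sparsity of the optimizer. Exchanging the order of the min over $(\alpha,q)$ and the max over $x$ (in the direction that only weakens the bound) and choosing the explicit test position $x=2\alpha+1$ then reduces everything to a one-dimensional minimization over $\alpha\in[0,(\pi-2)/4]$, which is bounded below by $0.984$, giving $\Omega_\mathrm{S}\geq0.99\,\Omega_\mathrm{U}$; fixing $x=\pi/2$ instead yields the deterministic \cref{cor:single_is_good}. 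Without this (or an equivalent) exact reduction, your plan leaves the intermediate regime and the uniformity over priors unproven.
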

The proof is shown in \cref{ap:single_is_good}. We show this by constructing an explicit \ac{SLGE} strategy which achieves this bound for all possible distributions. 
There we also proof the following corollary which shows that even a deterministic \ac{SLGE} method only dependent on the spectral width $\nu$ is already competitive
\begin{corollary}\label{cor:single_is_good}
     An \ac{SLGE} algorithm measuring at position $x=\frac{\pi}{2\nu}$ has a \om\ that is at least $97.5\%$ the \om\ of \ac{ULGE}, regardless of the underlying distribution ($\mathcal{D}_{\vec \theta}$, $\sigma^2$, $m$).
      (i.e.\ $\Omega_\mathrm{S}\geq 0.975\,\Omega_{\mathrm{U}}$)
\end{corollary}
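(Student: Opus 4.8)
The plan is to collapse the whole statement to a single scalar inequality by exploiting that $C_a$ is diagonal and that $\Omega$ is scale-invariant in $\hat\delta$. Fix the position $x$; since $\Omega$ is unchanged under rescaling of $\hat\delta$, the weight $w$ drops out, and using $\edistheta{a_k a_p}=\sigakt\,\delta_{kp}$ together with $f(x)=\sum_k a_k\sin(\mu_k x)$ and $\delta=\sum_k \mu_k a_k$ one obtains the closed form
\[
  \Omega_{\mathrm S}^2(x)=\frac{A(x)^2}{D\,(B(x)+\sigma^2/m)},
\]
with $A(x)=\sum_k \mu_k\sin(\mu_k x)\,\sigakt$, $B(x)=\sum_k \sin^2(\mu_k x)\,\sigakt$ and $D=\sum_k\mu_k^2\,\sigakt=\edistheta{\delta^2}$. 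The benchmark from the \ac{ULGE} theorem is $\Omega_{\mathrm U}^2=D/(D+\nu^2\sigma^2/m)$. It therefore suffices to lower-bound the ratio $R\coloneqq\Omega_{\mathrm S}^2/\Omega_{\mathrm U}^2$ at $x=\pi/(2\nu)$, uniformly over the prior $\{\sigakt\}_k$ and over $\tau\coloneqq\sigma^2/m>0$.

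First I would remove the dependence on the measurement budget. Viewing $R$ as a function of $\tau$, a short computation shows $\partial_\tau R$ has the sign of $\nu^2 B(x)-D$. At $x=\pi/(2\nu)$ this is nonnegative term by term: concavity of the sine on $[0,\pi/2]$ gives $\sin(\tfrac{\pi}{2}t)\ge t$ for $t\in[0,1]$, so with $t=\mu_k/\nu$ we get $\nu^2\sin^2(\tfrac{\pi\mu_k}{2\nu})\ge\mu_k^2$. Hence $R$ is nondecreasing in $\tau$ and is minimized as $\tau\to 0$, where $R\to A(x)^2/(D\,B(x))$. (Notably this makes the \emph{large}-$m$ regime the worst case for relative quality, consistent with \ac{SLGE} retaining a small bias there.)

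What remains is a distribution-free lower bound on $A^2/(DB)$ at $x=\pi/(2\nu)$. Setting $u_k=\mu_k\sqrt{\sigakt}$ and $v_k=\sin(\tfrac{\pi\mu_k}{2\nu})\sqrt{\sigakt}$, both nonnegative, this quantity equals $\langle u,v\rangle^2/(\|u\|^2\|v\|^2)$, the squared cosine of the angle between $u$ and $v$. Introducing $r_k\coloneqq\mu_k/\sin(\tfrac{\pi\mu_k}{2\nu})$ and weights $q_k\coloneqq v_k^2\ge0$, it becomes $(\sum_k r_k q_k)^2/[(\sum_k r_k^2 q_k)(\sum_k q_k)]=(\E r)^2/\E[r^2]$ for the distribution $q$ on the values $r_k$. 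Bounding $\E[r^2]$ via $\E[(r-r_{\min})(r_{\max}-r)]\ge0$ and minimizing over the mean yields the Kantorovich-type inequality $A^2/(DB)\ge 4\,r_{\min}r_{\max}/(r_{\min}+r_{\max})^2$, saturated by a two-point prior at the extreme $r$-values.

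Finally I would bound the range of the $r_k$. Writing $r_k=\nu\,g(\mu_k/\nu)$ with $g(t)=t/\sin(\tfrac{\pi}{2}t)$, which increases on $(0,1]$ from $2/\pi$ to $1$, the extreme admissible values are $r_{\max}=r_\nu=\nu$ and $r_{\min}=r_1=1/\sin(\tfrac{\pi}{2\nu})>2\nu/\pi$ (the last step from $\sin\theta<\theta$). Since $\rho\mapsto 4\rho/(1+\rho)^2$ is increasing on $(0,1]$, plugging $\rho=r_{\min}/r_{\max}>2/\pi$ gives $R\ge 4r_{\min}r_{\max}/(r_{\min}+r_{\max})^2>8\pi/(\pi+2)^2\approx 0.9508>0.975^2$, whence $\Omega_{\mathrm S}\ge 0.975\,\Omega_{\mathrm U}$. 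The main obstacle is twofold: establishing the (counterintuitive) monotonicity in $\tau$, and the fact that the final constant is essentially tight — the bound is approached only in the limit $\nu\to\infty$ with the prior concentrated on $\mu=1$ and $\mu=\nu$ — so the quantity $8\pi/(\pi+2)^2$ must be evaluated exactly to confirm that it clears $0.975^2=0.950625$.
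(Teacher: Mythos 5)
Your proposal is correct and follows essentially the same route as the paper's: both reduce $\Omega_{\mathrm S}^2/\Omega_{\mathrm U}^2$ to a ratio of first and second moments of the bounded variable $\mu/\sin(\mu x)$ (the paper works with its reciprocal $\sin(x\mu)/\mu$ under the weights $p_k\propto\sigakt\mu_k^2$), identify the worst case as a two-point prior at the spectral extremes together with vanishing effective shot noise, and arrive at the identical constant $8\pi/(\pi+2)^2 = 2\pi/(\pi/2+1)^2\approx 0.9507>0.975^2$. The only substantive repackaging is the order of the reductions --- you eliminate the measurement budget first via the monotonicity of the ratio in $\sigma^2/m$ and then invoke the Kantorovich inequality, whereas the paper carries $\alpha=\nu^2\sigma^2/(m\expvar{\mu^2})$ through an explicit piecewise two-point optimization and minimizes over it at the end --- and both orderings are sound.
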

The theorem and corollary show quantitatively that the expected gains from using unbiased estimation methods with multiple measurement positions w.r.t.\ the \om\ ($\Omega$) are small, even for large measurement budgets. It is also worth pointing out that the theorem and corollary do not use the periodicity of the unitary explicitly, meaning they also hold for \emph{non-periodic} unitaries with a generator of spectral width $\nu$.

\subsection{Summary of the measurement budget allocation methods}
\Cref{tab:sum} summarizes the main methods discussed above by comparing their error scalings, the number of expressions that need to be evaluated, whether all expressions are within the ansatz class of the \ac{VQA}, and whether  prior estimates of the shot noise $\sigma^2$ and the second-moment matrix $C_a$ are needed.

\section{Application to \texorpdfstring{\acsp{QAOA}}{QAOAs}}
\label{sec:results}
For our numerics, we use a popular \ac{QAOA} setup \cite{FarGolGut14}.
Here, the ground state of the problem Hamiltonian $H_c$ encodes the solution to the $\MaxCut$ problem on a graph $\mathcal{G} = ([N], E)$ with vertex set $[N]$ and edge set $E$.
The \MaxCut problem is the problem of finding a labelling of the vertex set that maximizes the number of so-called \textit{cut edges}.
The allowed labels are 0 and 1 and an edge is \textit{cut} if it connects two vertices that have different labels.
We identify the computational basis states of our qubits with the two labels.
$H_c$ contains terms for every edge in the graph, which are valued at $-1$ if the edge is cut and $0$ otherwise:
\begin{align}
  \label{eq:hc}
  H_c = \frac{1}{2}\sum_{(i,j) \in E} \left(Z_iZ_j - \mathds{1}\right)\,,\quad E \subseteq [N]\times[N]\,.
\end{align}
The energy of the system described by $H_c$ is minimized by any state that corresponds to the maximal number of edges being cut.
We denote this maximal number of cut edges by $\MaxCut (\mathcal{G})$.

In our numerical experiments the edge set $E$ was randomly generated by selecting a subset of $2N$ edges from the set of all possible edges on $N$ vertices.

The \ac{QAOA} cost function is defined as
\begin{align}
    F(\vec\theta)=\bra{\vec\theta}H_c\ket{\vec\theta}\,,
\end{align}
where the state $\ket{\vec\theta}$ is prepared using the parametrized circuit
\begin{align}
    \ket{\vec\theta} = \left[\prod_{\alpha=1}^{L}U_b(\theta_{2\alpha-1})U_c(\theta_{2\alpha})\right]\ket{+}^{\otimes N}
\end{align}
with $U_c(\theta)=\exp(\i \theta H_c)$ an evolution under $H_c$ and $U_b(\theta)=\exp(\i\theta H_b)$ an evolution under a mixing Hamiltonian $H_b$. 
In our setup the Hamiltonian is given by
\begin{align}
   H_b = -\frac{1}{2}\sum_{i=1}^N X_i\,.
\end{align}
The number of times each unitary appears is referred to as the circuit depth and labelled $L$ and one pair of unitaries $U_b$ and $U_c$ is referred to as one layer in the following.
The initial state that these unitaries are applied to is $\ket{+}^{\otimes N}$, the ground state of $H_b$.
In order to effectively compare the performance of \ac{QAOA} on different graph instances $\mathcal G$ and $\mathcal G'$ with $\MaxCut (\mathcal G) \neq \MaxCut (\mathcal G')$, we introduce the approximation ratio
\begin{align}\label{eq:app_ratio}
    r &= -\frac{F(\vec\theta_{\text{alg}})}{\MaxCut (\mathcal G)} =\frac{F(\vec\theta_{\text{alg}})}{\lambda_{\min}(H_c)}\,,
\end{align}
where $\vec\theta_{\text{alg}}$ is some (possibly intermediate) parameter point determined by some optimization algorithm and $\lambda_{\min} (\argdot)$ evaluates to the smallest eigenvalue of its argument.

\subsection{Deriving prior estimates}
\label{sec:ca}
\begin{figure}[ht]
	\centering
	\includegraphics[width=0.49\linewidth]{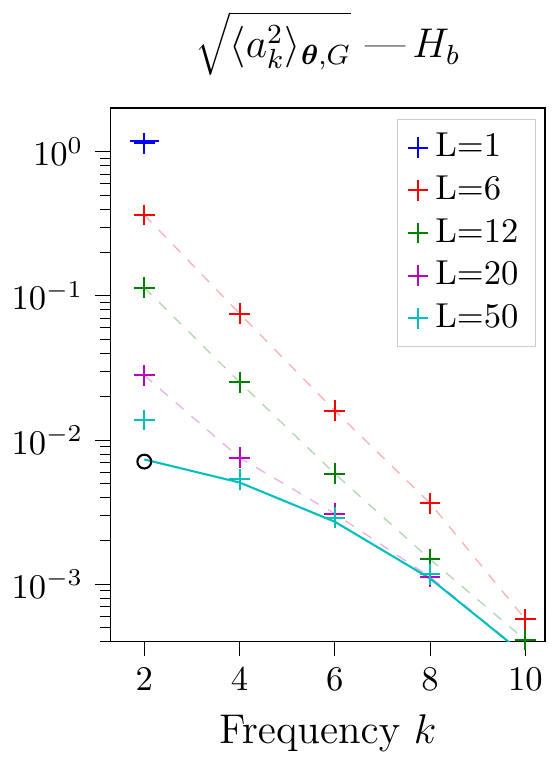}
	\includegraphics[width=0.49\linewidth]{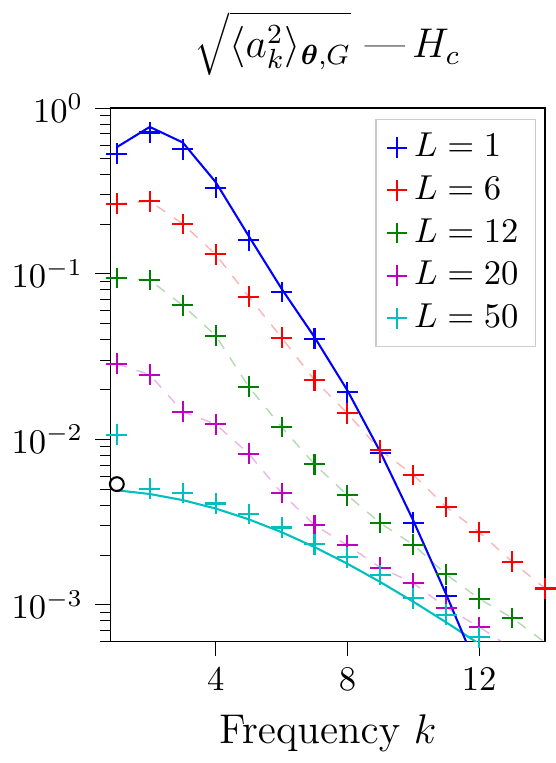}
	\includegraphics[width=0.6\linewidth]{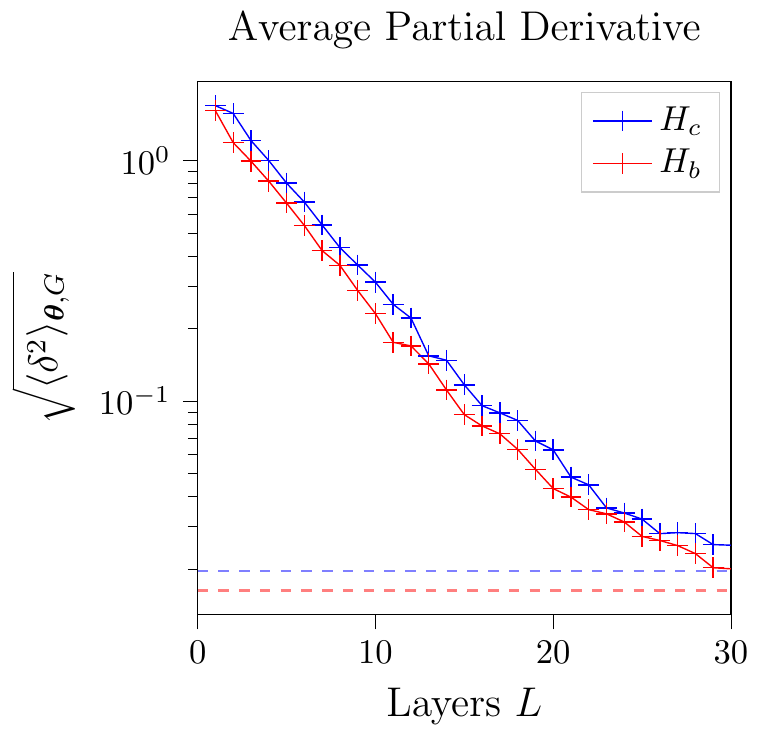}
	\caption{\textbf{Top:} The dependence of $\sigakt$ on the frequency $\mu_k$ for various circuit depths $L$ and a system size $N=16$ with $M=32$ edges as well as a uniform distribution over $\vec \theta$ and all graphs for $2000$ randomly sampled instances.
	The Fourier decomposition is taken w.r.t.\ a layer in the bulk of the circuit, specifically the $\lceil\frac L2\rceil$-th layer. The decomposition is for the $H_b$ gate (left) and the $H_c$ gate (right). The solid lines for $L=1,50$ represent analytical estimates. The black circles show the estimate when we restrict the distribution to graphs which are fully connected and exhibit no non-trivial graph automorphism.
	\textbf{Bottom:} The root of mean square-amplitude of the partial derivative in dependence of the circuit depth. The dashed line shows the Barren plateau limit.
	\label{fig:a_dependence}}
\end{figure}
In this example, $H_c$  is dependent on the graph instance, hence the $\sigakt$ are too.
Finding the exact coefficients for a particular instance can be assumed to be difficult, since already determining the spectral width of $H_c$ is an \NP-hard task in general. It is possible however to sample from graph bipartitions to find an approximate spectral distribution numerically.

To derive analytical estimates, we are extending our distribution to also include all considered graph instances
\begin{align}
    \edisboth{\argdot}\rightarrow \av{\argdot}_{s,\vec\theta,G}\,,
\end{align}
where $G$ indicates drawing the samples from the set of all graphs with $M$ edges and $N$ vertices. 
This generalization allows estimating the priors analytically, which we do in \cref{ap:a_dependence}.
The results are shown in \cref{fig:a_dependence}. 
The plotted points represent a mean over $2,000$ randomly chosen points and graph instances. 
For $L=1$ we derive analytical expressions for both Hamiltonians (dark blue line) in \cref{ap:l1_est}.
While this is a tedious problem, it is efficiently solvable. 
For $U_b$, the specific structure of the \ac{VQA} means that only the Fourier coefficient for $\mu=2$ does not vanish and in general, only even coefficients contribute. 
For $L\rightarrow\infty$ we obtain an analytical estimate (cyan line) using the $2$-design assumption in \cref{ap:linf_est}. 
While this estimate faithfully reproduces the bulk of the frequencies at $L=50$ layers, the empirical estimate for the first frequency is significantly larger than the theoretical prediction.
This discrepancy arises from graph instances which correspond to non-universal gate-set \acp{VQA} instances. 
In particular, this is the case when the graph is not fully connected or has a non-trivial automorphism. 
The additional black circles for the first frequency in the figures shows the empirical estimate obtained when we restrict the set $G$ to only graphs which do not exhibit these properties. 
As can be seen these instances faithfully reproduce the $2$-design prediction.

\Cref{fig:a_dependence}c shows this convergence of the expected  derivative.
Numerically the convergence appears to be close to completed at $L\sim30$ layers, which is consistent with known 2-design convergence results in literature, where depth scales roughly on the order of the number of qubits~\cite{hunter-jones_unitary_2019}.
 The dotted lines show the barren plateau limit when the ergodic argument is used.
 Again due to the existence of graphs with symmetries, the convergence is not exactly to the theoretical 2-design limit.
For the intermediate layers, we can argue (particularly for the case regarding $H_c$), that the overall amplitude of the coefficients decays exponentially while the relative values transition from a behavior like $L=1$ to one closer related to $L\rightarrow\infty$. 
It is our belief that a more rigorous understanding of $t$-design convergence within quantum circuits could help to make more quantitative assessment than we are possible to make at present.
We note that while quantifying the priors may be a worthwhile theoretical pursuit, for practical applications it often suffices to have a rough estimate of the amplitudes, as even with significant over estimation of amplitudes the method will still outperform traditional non-Bayesian schemes. Also since real implementations attempt to avoid the barren plateaus regimes, the gradient along the optimization path may be significantly larger and therefore not fully representative of the behavior of the ensemble average.

\subsection{Numerical results}
In this section we first test the estimation accuracy for the different gradient estimation strategies for randomly selected angles $\vec \theta$.
Afterwards we demonstrate their usefulness for gradient descent based \ac{VQA} optimization. 

\subsubsection{Gradient quality for the different allocation methods} 
\begin{figure}
	\centering
	\includegraphics[width=0.5\linewidth]{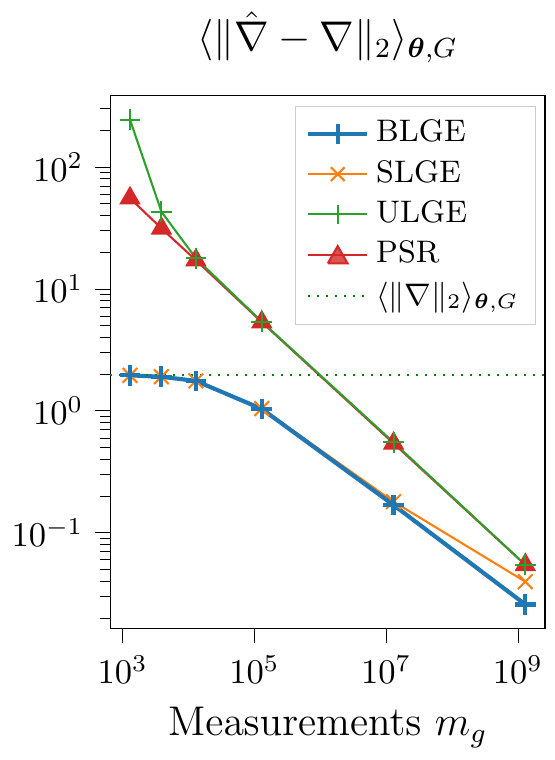}
	\includegraphics[width=0.48\linewidth]{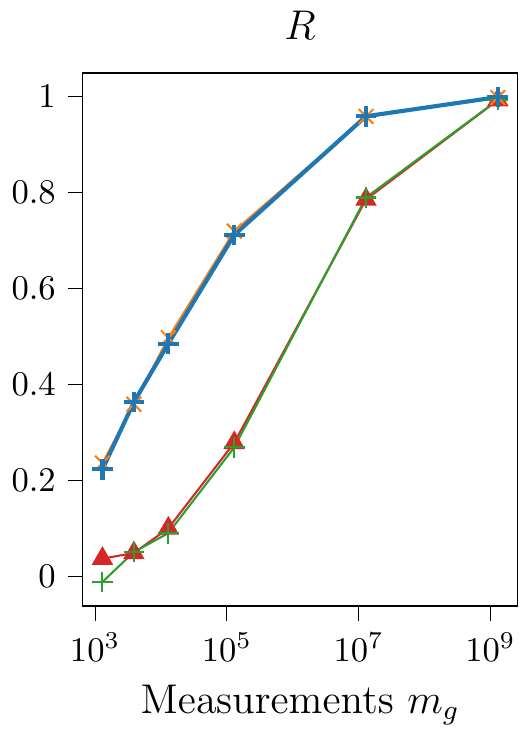}

	\caption{A comparison of the different gradient estimation methods for different measurement budget for the entire gradient ($\mg$).
	The data was generated with a system size of $N=18$, graph instances with $M=36$ edges, and a circuit depth of $L=12$ with $\sigakt$ estimates as stated in \cref{eq:akb_estimate}. 
	Each data point represents an empirical mean over 300 total samples drawn from $30$ different graph instances. 10 random parameter points were evaluated for each instance.
	The measurement budget is distributed in such a way that the $H_c$-layers get twice the measurements of the $H_b$-layers.
	\textbf{Left:} The average 2-norm distance between the exact gradient and the estimate using the different methods.  
	\textbf{Right:} The average relative slope $R$.
	\label{fig:gradqualities}}
\end{figure}
\Cref{fig:gradqualities} shows the quality of the estimated gradients for the different optimization routines and a range of measurement budgets. 
The size of the measurement budget ($m_g$) is the entire budget, i.e.\ jointly for all partial derivatives. The first value of $\mg=1296$ is the first which allows \ac{PSR} and \ac{ULGE} to have exactly one measurement round per expression to be evaluated.
For the priors of the Fourier coefficients, we select 
\begin{align}
\begin{aligned}
\label{eq:akb_estimate}
    \left.\sigakt\right|_b&=10^{-0.3k-1.1}\times\delta_{k\in 2\mathbb{Z}}\\
    \left.\sigakt\right|_c&=10^{-0.3k-1.6}
\end{aligned}
    \end{align}
resulting from a rough exponential fit obtained from \cref{fig:a_dependence}.
\Cref{fig:gradqualities}a depicts the dependence of the 2-norm difference between the exact and the estimated gradient for different budget $m_g$
\begin{align}
    \twonorm{\hat \nabla-\nabla}\,
\end{align}
where $\nabla$ is the exact gradient and $\hat \nabla$ the estimate generated by performing the estimation routine for every component of the gradient. 
We note that for very few measurements \ac{ULGE} performs significantly worse than \ac{PSR}. 
This is because the required positive integer rounding for the individual number of measurements at each site implies a far from optimal allocation for ULGE, while \ac{PSR} does not require any rounding.
Besides this we see good agreement between the numerical results and what was predicted from \cref{fig:measpos}.
Explicitly enforcing the condition that only two position are to be evaluated as in \ac{SLGE} (\cref{sec:single}) only starts to make a significant difference compared to \ac{BLGE} at around $10^7$ measurements which may be already infeasible in a practical experiment. This shows that while the asymptotic behavior is significantly worse, for practical purposes, a correctly chosen finite differences model performs remarkably well.
For fewer measurements, \ac{ULGE} and \ac{PSR} require already $m_g=10^5$ measurements to outperform an estimator which returns the all zero vector. 

For the purpose of gradient descent, one can argue that the direction of the gradient is actually more important than its magnitude.
In order to quantify this notion we investigate the \emph{relative slope} $R$, the ratio between the slope in the direction of the actual gradient and the slope in the direction of the estimated gradient
\begin{align}
    R\coloneqq\av*{\frac{\hat \nabla^T\nabla}{\|\hat \nabla\|_2\|\nabla\|_2}}_{\vec \theta} = \av*{\frac{\mathrm{desc}(\hat \nabla)}{\mathrm{desc}(\nabla)}}_{\vec \theta}
\end{align}
where 
\begin{align}
    \mathrm{desc}(\vec g) = \frac{\mathrm{d}}{\mathrm{d} x} \left[ F\left( \frac{\vec g}{\pnorm[2]{\vec g}} x \right) \right]\,,
\end{align}
which indicates the slope in the direction of $\vec g$.
A value of $R=0$ would indicate that the estimated gradient is orthogonal to the actual gradient. A perfect estimator would achieve a value of $R=1$.
The relative slope $R$ is similar in nature to $\Omega$ defined in~\cref{eq:omega} but differs in the way that the ensemble averages are taken.
Numerically, the manner in which the averages are taken does not qualitatively change our results.
Here again, we see good agreement with the behavior of $R$ and the behavior predicted in \cref{fig:measpos} for $\Omega$. This also justifies why $\Omega$ is indeed a good quality parameter for the estimation.
For the fewest number of measurements considered, \ac{PSR} and \ac{ULGE} struggle to find any decreasing direction, while \ac{BLGE} reliably has a $R=20\%$ overlapp. The other methods require nearly $100$-times the number of measurements for the same quality. Similarly to $\Omega$, \ac{BLGE} and \ac{SLGE} show basically identical performance with regard to the quality measure $R$.
\subsubsection{Parameter Optimization}
We also simulated a complete parameter optimization routine.
Following the proposal from~Zhou et al.~\cite{Zhou2018QuantumApproximate}, we chose the initial parameters $\vec \theta$ to resemble an approximate linear annealing ramp, as this can significantly improve performance compared to random initialization. The exact initialization we used was (even index $H_b$, odd $H_c$)
\begin{align}
    \theta_{i}^{(0)}=\frac{\pi}{20}\left(\frac{4-5{\delta_{i\in 2\mathbb{Z}}}}{L-1} \times i+4\right)
\end{align}
It is worth noting that since we are starting from a specific initialization, the uniformity assumption of our assumed distribution $\mathcal{D}_{\vec \theta}$ might no longer be valid, meaning this also tests the applicability of our approach outside idealized conditions.
For the update step we use a basic gradient descent routine 
\begin{align}
    \vec \theta^{(i+1)}=\vec \theta^{(i)}-\eta^{(i)} \hat \nabla^{(i)}
\end{align}
Since the different methods return gradients with massively different norms, a fixed step size will skew the result heavily. 
To compensate we use a backtracking line-search routine to find a good $\eta^{(i)}$. 
We start with an initial step size that is significantly too large. Then we repeatably measure the observable at the proposed new point $\vec \theta^{(i)}-\eta^{(i)} \hat \nabla^{(i)}$ and either accept the step size if the estimate decreased compared to the current position $\vec \theta^{(i)}$ or half the step size $\eta^{(i)}\rightarrow \eta^{(i)}/2$ and repeat. 
The measurement budget allocated for the line-search estimate is chosen to be the same as for the gradient estimation. 
This line search removes the dependence of the size of the returned gradient without the algorithm becoming a full sweeping algorithm. 

\begin{figure}[ht]
    \centering
    \includegraphics[width=0.473\linewidth]{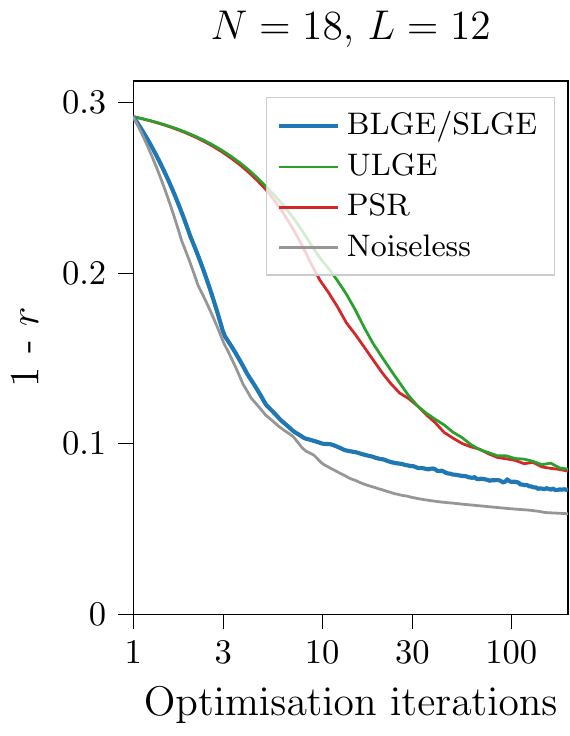}
    \includegraphics[width=0.507\linewidth]{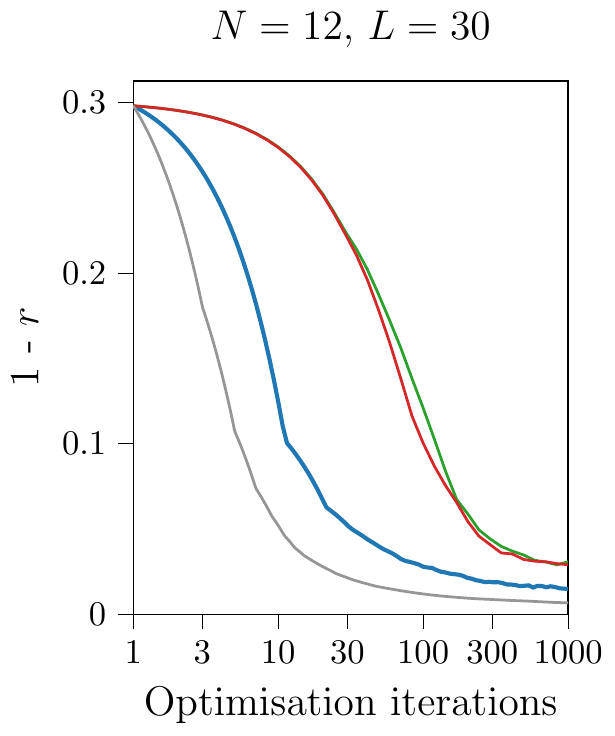}
    \caption{The empirical average approximation ratio $r$ over the iterations of a gradient descent optimization with line search for the different optimization routines. The results are averaged over 23 different problem instances. We note that in the considered measurement regimes \ac{BLGE} and \ac{SLGE} are identical.
    \textbf{Left:} A shallow circuit with $N=18$, $M=36$, $L=12$ with the total measurement for each gradient being $\mg=3888$, 
    which corresponds to $6$ measurements for each generator in the PSR. The priors were taken as an exponential ansatz as given in \cref{eq:akb_estimate}.
    \textbf{Right:} A deeper circuit with $N=12$, $M=24$, $L=30$, $\mg=6480$, 
    which also corresponds to $6$ measurements for each generator. The priors were taken as the barren plateau estimates \cref{eq:a_b_barren}  and \cref{eq:a_c_barren}.
    }
    \label{fig:optimization}
\end{figure}

\Cref{fig:optimization} shows the numerical results of the optimization routine using each of the gradient estimation routines. The cost function shown here is the approximation ratio $r$ as defined in \cref{eq:app_ratio}, which is a common metric for \ac{QAOA} numerics.
We consider both a case with a shallow circuit ($L=12$) with $N=18$ and a deep circuit ($L=30$) in the barren plateau regime, but for a smaller system size of $N=12$. 
In both these cases we consider an overall measurement budget so that each measurement setup for \ac{PSR} is performed $3$ times. This translates to $\mg=3888$ for the shallow circuit case and $\mg=6480$ for the deep circuit.
It is also worth noting that with such a small measurement budget, \ac{BLGE} and \ac{SLGE} are identical. 
Additionally, we plot the case for the exact gradient with an infinite measurement budget in gray. 

\Cref{fig:optimization} also shows that \ac{BLGE} significantly outperforms the unbiased methods with convergence being nearly an order of magnitude faster and also reaching a better minimal value.

\section{Conclusion and outlook}
We have shown that using a Bayesian 
approach in a generalized \ac{PSR} setting can significantly improve the resulting quality of the estimation, which ultimately improves the overall run time and results of the \ac{VQA}. 
In particular, when dealing with barren plateaus, this estimation tool may prove crucial for performance in  practical implementations. 
Our study also makes a strong argument that central difference methods with a reasonably chosen step size is a very good first strategy that is only outperformed by the unbiased \ac{PSR} for large measurement budgets. 

Our work opens up several new research quests. 
\begin{itemize}
    \item In future work, we aim to formally extend our framework to non-periodic unitaries. 

    \item Understanding the second moment matrix and its convergence into the barren plateau regime might allow us to develop strategies to mitigate its effects allowing \acp{VQA} to be effective for more ansatz classes. 
    Such improved priors might come from studies of approximate unitary 2-designs. 
    
    \item 
    Improved optimization methods, such as natural gradient estimation 
    \cite{Stokes2020QuantumNaturalGradient,Yamamoto19OnTheNatural,Wierichs2020AvoidingLocalMinima}
    or higher order optimization procedures requiring second order derivatives, 
    may also benefit from introducing prior assumptions into their estimation routines.

    \item In particular, for practical applications where quantum computation is expensive and classical computation is cheap, using an adaptive estimator may be advantageous. 
    That is, updating the Fourier priors along the optimization path may also help to improve the gradient quality further. 
\end{itemize}

\section{Aknowledgement}
We thank 
Raphael Brieger, 
Lucas Tendick, 
Markus Heinrich, 
Juan Henning,
Michael J.\ Hartmann, Nathan McMahon, and Samuel Wilkinson,
for fruitful discussions. 
This work has been funded by the German Federal Ministry of Education and Research (BMBF) within
the funding program ``quantum technologies -- from basic research to market'' via the joint project MANIQU
(grant number 13N15578) and the Deutsche Forschungsgemeinschaft (DFG, German Research Foundation) under the grant number 441423094 within the Emmy Noether Program. 


\section{Acronyms and list of symbols}
\label{sec:symbols}

\begin{acronym}[SBLGE]\itemsep.5\baselineskip
\acro{AGF}{average gate fidelity}

\acro{BLGE}{Bayesian linear gradient estimator}

\acro{BOG}{binned outcome generation}

\acro{RB}{randomized benchmarking}

\acro{CP}{completely positive}
\acro{CPT}{completely positive and trace preserving}

\acro{DFE}{direct fidelity estimation} 

\acro{GST}{gate set tomography}

\acro{HOG}{heavy outcome generation}

\acro{LGE}{linear gradient estimator}

\acro{MUBs}{mutually unbiased bases} 
\acro{MW}{micro wave}

\acro{NISQ}{noisy and intermediate scale quantum}

\acro{POVM}{positive operator valued measure}
\acro{PQC}{parametrized quantum circuit}
\acro{PSR}{parameter shift rule}
\acro{PVM}{projector-valued measure}

\acro{QAOA}{quantum approximate optimization algorithm}

\acro{SFE}{shadow fidelity estimation}
\acro{SIC}{symmetric, informationally complete}
\acro{SLGE}{single Bayesian linear gradient estimator}
\acro{SPAM}{state preparation and measurement}

\acro{QPT}{quantum process tomography}

\acro{TT}{tensor train}
\acro{TV}{total variation}

\acro{ULGE}{unbiased linear gradient estimator}

\acro{VQA}{variational quantum algorithm}
\acro{VQE}{variational quantum eigensolver}

\acro{XEB}{cross-entropy benchmarking}

\end{acronym}


\begin{itemize}[labelindent=3em,leftmargin=*]
\item[$N$:] number of qubits/vertices (\acs{QAOA}) 
\item[$L$:] number of layers of the \ac{VQA}
\item[$\vec \lambda$:] eigenvalues of the gate generator
\item[$\vec \mu$:] eigenvalue differences of the gate generator
\item[$a_k,b_k,c_k$:] Fourier coefficients
\item[$C_a$:]  Second moment matrix of the Fourier coefficient $a_k$
\item[$\sigma$:] single shot shot noise
\item[$w_i$:] weights of linear estimator
\item[$\vec\theta$:] \acs{VQA} parameters 
\item[$\theta = x_i$:] measurement position for given $\theta\in \vec\theta$
\item[$n_x$:] number of (positive) measurement positions
\item[$m_i$:] number of measurement rounds for $x_i$
\item[$m$:] total number $m=\sum m_i$
\item[$\nu$:] largest difference between two eigenvalues of the unitary generator, i.e.\ $\max_k\{\mu_k\}$
\item[$\delta$:] derivative of the cost function w.r.t.\ one parameter $\theta$
\item[$\Delta$:] second derivative of the cost function w.r.t.\ one parameter $\theta$
\end{itemize}

\bibliographystyle{myapsrev4-2}
\bibliography{mk,Len.bib}
\newpage
\onecolumn
\section*{Appendices}
\appendix
\setcounter{section}{1}
\renewcommand\thesubsection{\Alph{subsection}}

\subsection{Solving the Bayesian allocation problem} 
Here we derive the theoretic underpinnings of the Bayesian allocation method. 
Namely, we derive the effective dual that is used for the numerical implementation and proof \cref{thm:finite_positions}, which concerns the sparsity of the solution. 

\subsubsection{Finding the dual problem}
\newcommand{\slackvar}{z}
\label{sec:optimization}
In this section, we derive the dual formulation, we use to find the optimal measurement positions as explained in the main text.
The minimization (\cref{eq:primalop}), we are interested in, can be rewritten into a constraint problem
\begin{align}
    \min_{\vec x \in \RR^{n_x},\c\in\RR^{n_x}}\edisboth{\epstot^2}(\vec x, \c)&=\min_{\vec x\in \RR^{n_x},\c \in \RR^{n_x}} \sum_{k=1}^{n_\mu} \sigakt\left(\sum_{i=1}^{n_x} \ci\sin(x_i \mu_k)-\mu_k\right)^2+ \frac{\sigma^2}{m} \norm{\c}_1^2\\
    &=\min_{\vec x\in \RR^{n_x},\c \in \RR^{n_x},\vec\slackvar\in \RR^{n_\mu},l\in\RR} \sum_{k=1}^{n_\mu} \sigakt \slackvar_k^2+ \frac{\sigma^2}{m} l^2\\
    &\mathrm{s.t.} \quad  \sum_{i=1}^{n_x} \ci\sin(x_i \mu_k)-\mu_k=\slackvar_k \, ,\, \norm{\c}_1\leq l \, ,
\end{align}
where we introduced the variables $\vec \slackvar$ and $l$. If we keep $\vec x$ fixed (it can be assumed to describe a fine grid covering a complete period of the function), this describes a convex optimization problem.
We proceed to derive the dual $g$.
\begin{align}
    g(\vec x;\vec \slackvar,\c,l;\vec \slvar,\tau)& =\sum_{k=1}^{n_\mu} \sigakt \slackvar_k^2+ \frac{\sigma^2}{m} l^2+\sum_{k=1}^{n_\mu}2\slvar_k \left( \sum_{i=1}^{n_x} \ci\sin(x_i \mu_k)-\mu_k-\slackvar_k\right)-2\tau(l-\norm{\c}_1)\\
    \frac{\partial g}{\partial \slackvar_k}&=2 \sigakt \slackvar_k-2\slvar_k \rightarrow \slackvar_k=\frac{\slvar_k}{\sigakt}\\
    \frac{\partial g}{\partial l}&=2 \frac{\sigma^2}{m} l-2\tau \rightarrow l=\tau\frac{m}{\sigma^2}\\
    \frac{\partial g}{\partial \ci}&= 2\tau \,\mathrm{sgn}(\ci) +2\sum_{k=1}^{n_\mu}\slvar_k \sin(x_i \mu_k)\rightarrow 
    \tau\geq \left|\sum_{k=1}^{n_\mu}\slvar_k \sin(x_i \mu_k)\right|\label{eq:prim_const}
\end{align}
where the last step follows as $g$ is affine w.r.t.\ $\ci$ for the positive and negative axis. \Cref{eq:prim_const} also implies that either $\ci=0$ or $\tau=|\sum_{k=1}^{n_\mu}\slvar_k \sin(x_i \mu_k)|$, also known as complementary slackness.
\begin{align}
    g(\vec x;\vec \slvar,\tau)&=-\sum_{k=1}^{n_\mu} \frac{\slvar_k^2}{\sigakt}-\frac{m}{\sigma^2} \tau^2-2\sum_{k=1}^{n_\mu}\slvar_k \mu_k\\
    &\mathrm{s.t.}\quad \forall x_i :\quad\tau\geq |\sum_{k=1}^{n_\mu}\slvar_k \sin(x_i \mu_k)|
\end{align}
Since $\vec x$ wants to minimize this expression, $\tau$ needs to be maximized. As all measurement positions are allowed and we impose no bound on $n_x$ meaning $\vec x$ can be distributed arbitrarily dense, it follows
\begin{align}
    \max_{i\in[n_x]} |\sum_{k=1}^{n_\mu}\slvar_k \sin(x_i \mu_k)|\rightarrow\norm{\sum_{k=1}^{n_\mu} \slvar_k \sin(\mu_k (\argdot))}_{\infty}\,
\end{align}
where the infinity norm refers the absolute value maximum over the domain of the function,
and we get $\tau =\norm{\sum_{k=1}^{n_\mu} \slvar_k \sin(\mu_k (\argdot))}_{\infty}$, which leads to 
\begin{align}
    g(\vec \slvar)&=-\sum_{k=1}^{n_\mu} \frac{\slvar_k^2}{\sigakt}-\frac{m}{\sigma^2} \norm{\sum_{k=1}^{n_\mu} \slvar_k \sin(\mu_k (\argdot))}_{\infty}^2-2\sum_{k=1}^{n_\mu}\slvar_k \mu_k\,.
\end{align}
We conclude that 
\begin{align}
    \edisboth{\epstot^2}^*&=\max_{\vec \slvar\in \RR^{n_\mu}} g(\vec \slvar)\,,
\end{align}
since the minimization of the primal is convex for fixed $\vec x$.
Solving this requires maximizing a concave problem.
 $g$ has the same solution as minimizing
\begin{align}
    \tilde g(\vec \slvar)&=\sum_{k=1}^{n_\mu}\frac{\sigma^2}{m\sigakt} \slvar_k^2+ \norm{\sum_{k=1}^{n_\mu} \slvar_k \sin(\mu_k (\argdot))}_{\infty}^2-2\sum_{k=1}^{n_\mu}\slvar_k \mu_k
\end{align}
with $\tilde g=-\frac{m}{\sigma^2}g$ and $\slvar^*\rightarrow -\frac{\sigma^2}{m}\slvar^*$ which we find has better numerical stability, especially for large measurement budgets $m$.
Via complementary slackness, the final measurement positions $\vec x^*$ are just the global maxima positions of $\rho_{\vec \slvar}(x)=|\sum_{k=1}^{n_\mu}\slvar_k^* \sin(x \mu_k)|$, which can be obtained by solving a trigonometric polynomial, which has a most $\nu$ maxima. To obtain $\c^*$, we solve the original problem for the now fixed measurement positions $x^*$.

\subsubsection{Proof of \texorpdfstring{\cref{thm:finite_positions}}{theorem 1}}
\label{ap:proof_of_sparse}
Here we proof that the optimization problem has a sparse solution. Explicitly that $n_x=n_{\mu}$ positions suffices.
For this we assume that we have set of measurement positions $\vec x \in [0,\pi)^{n_x}$. For the problem we assume to have
\begin{align}
    \edisboth{\epstot^2}&=(\S\c-\vec \mu)^T C_a(\S\c-\vec \mu)+\frac{\sigma^2}{m}\onenorm*{\c}^2\,,
\end{align}
and an optimal solution $\vec w^*$. As such we can define $\edisboth{\esys^2}^*\coloneqq(\S\c^*-\vec \mu)^T C_a(\S\c^*-\vec \mu)$ and $y^*=\S\c^*$
This means the problem is equivalent as
\begin{align}
    \edisboth{\epstot^2}&=\edisboth{\esys^2}^*+\frac{\sigma^2}{m}\left(\min_{\c:y^*=\S\c} \onenorm*{\c}\right)^2\,,
\end{align}
The latter term constrains an $\ell_1$-norm optimization with $n_{\mu}$ linear constraints. It is well known that there exists an optimal sparse solution where the number of non-vanishing entries is at most the number of constraints, which are known as  \emph{basic feasible solutions} in LP literature. As this holds regardless of the actual value $\vec y^*$, there also exists a sparse solution for the entire problem which proofs the theorem.
\qed 

The same also holds for the unbiased case, where $\vec y^*=\vec\mu$ a strict requirement.

\subsection{%
\texorpdfstring{\Acf{SLGE}}{SLGE}
}
In the following section we derive various properties of our single measurement estimation strategy.
In \cref{ap:pre}, we derive the optimal coefficient $\cscalar$ for our estimator, which we use to rewrite the expected total error of our estimator as a function of only the measurement position $x$.
Next in \cref{ap:single_err}, we proof the scaling in the limits of many measurements from \cref{eq:23scaling}. In \cref{ap:single_om}, we proof \cref{thm:single_small_m} which is valid for very few measurements.
Finally, in \cref{ap:single_is_good} we prove \cref{thm:single_is_good} and \cref{cor:single_is_good} from our main text which are about the relative performance of \ac{SLGE} and \ac{ULGE}.

\subsubsection{Preliminaries} 
\label{ap:pre}
Recall that we can express the expected total error as
\begin{align}
	\edisboth{\epstot^2}(\cscalar,x) 
	&= \sum_{k=1}^{n_\mu}\sigakt(\cscalar\sin(\mu_k x)-\mu_k)^2+\frac{\sigma^2}{m}\cscalar^2\\
	&\eqqcolon \expvar{(\cscalar\sin(\mu x)-\mu)^2} +\frac{\sigma^2}{m}\cscalar^2\label{eq:toterr_single}\,,
\end{align}
where we introduced the shorthand for the ensemble average
\begin{align}
    \expvar{g(\mu)}\coloneqq \sum_k^{n_{\mu}} \sigakt g(\mu_k)\,,
\end{align}
for an arbitrary function $g$. 

Given $x$, the optimal coefficient $\cscalar$ can be determined to be 
\begin{align}
    \label{eq:opt_w_single}
    \cscalar^* = \frac{\expvar{\mu\sin(\mu x)}}{\expvar{\sin^2(\mu x)}+\frac{\sigma^2}{m}}\,.
\end{align}
Plugging $\cscalar^*$ into~\cref{eq:toterr_single} yields
\begin{align}
\label{eq:epstotsingle-solveforx}
	\edisboth{\epstot^2}(x) &= \expvar{\mu^2} - \frac{\expvar{\mu \sin(\mu x)}^2}{\expvar{\sin^2(\mu x)}+\frac{\sigma^2}{m}}
	=\frac{\expvar{\mu^2}\expvar{\sin^2(\mu x)}-\expvar{\mu \sin(\mu x)}^2+\expvar{\mu^2}\frac{\sigma^2}{m}}{\expvar{\sin^2(\mu x)}+\frac{\sigma^2}{m}}
\end{align}
\subsubsection{Numerical optimization}
\label{ap:numsing}
We are now going to outline how to numerically choose the optimal value of $x$.
For the numerical minimization we use \cref{eq:epstotsingle-solveforx}. 
Here we assume that estimates of $\sigakt$ and $\sigma$ are known.

The derivative of \cref{eq:epstotsingle-solveforx} w.r.t.\ $x$ is
\begin{align}
    \partial_x\edisboth{\epstot^2}(x) &= - \frac{2\expvar{\mu \sin(\mu x)}\expvar{\mu^2 \cos(\mu x)}\left(\expvar{\sin^2(\mu x)}+\frac{\sigma^2}{m}\right)-\expvar{\mu \sin(\mu x)}^2\expvar{2\mu\cos(\mu x)\sin(\mu x)}}{\left(\expvar{\sin^2(\mu x)}+\frac{\sigma^2}{m}\right)^2}\\
    &=-\frac{2\expvar{\mu \sin(\mu x)}}{\left(\expvar{\sin^2(\mu x)}+\frac{\sigma^2}{m}\right)^2}\times\underbrace{\left(\expvar{\mu^2 \cos(\mu x)}\left(\expvar{\sin^2(\mu x)}+\frac{\sigma^2}{m}\right)-\expvar{\mu \sin(\mu x)}\expvar{\mu\cos(\mu x)\sin(\mu x)}\right)}_{\eqqcolon h(x)}\nonumber\,.
\end{align}
The relevant minima candidates are therefore the roots of the second factor - labelled $h(x)$ - since roots of the first factor always yield a maximum of $\edisboth{\epstot^2}(x)$ (as the first factor appears in \cref{eq:epstotsingle-solveforx} with a negative sign).

We can rewrite $h(x)$ explicitly as
\begin{align}
    h(x)\label{eq:x_fit}
    &=\expvar{\mu^2 \cos(\mu x)}\left(\expvar{\sin^2(\mu x)}+\frac{\sigma^2}{m}\right)-\frac{1}{2}\expvar{\mu \sin(\mu x)}\expvar{\mu\sin(2\mu x)}\\
    &=\sum_{k,l}\sigakt\sigalt\mu_k^2\cos(\mu_k x)\sin^2(\mu_l x)
-\frac12\sum_{k,l} \sigakt \sigalt \mu_k\mu_l \sin(\mu_k x)\sin(2\mu_l x)
	+\frac{\sigma^2}{m}\sum_k \sigakt \mu_k^2\cos(\mu_k x)\,,\nonumber
\end{align}
which can be solved efficiently using standard solvers as it only requires finding the roots of a trigonometric polynomial of degree $3\specwidth$. From these candidate solutions, we can find the numerical exact optimal solution.
\subsubsection{Limit for many measurements}
\label{ap:single_err}
In this section we determine the scaling of the expected total error in the limit of many measurements, i.e.\ high measurement accuracy, as stated in \cref{tab:sum}.

For $m\rightarrow\infty$, the optimal measurement position becomes $x\rightarrow 0$, since noiseless measurements lead to a finite difference approximation which becomes exact when the measurement position approaches $0$.
This observation justifies the use of a Taylor series expansion
\begin{align}
\expvar{\mu^2}\expvar{\sin^2(\mu x)}-\expvar{\mu \sin(\mu x)}^2=\frac{\xi}{2} x^6 +O(x^8)\,,
\end{align}
where $\xi=\frac{\expvar{\mu^2}\expvar{\mu^6}-\expvar{\mu^4}\expvar{\mu^4}}{18}$, which is non-negative. 
Plugging this expression into \cref{eq:epstotsingle-solveforx} yields

\begin{align}
    \edisboth{\epstot^2}(x) &=\frac{\frac{\xi}{2} x^6+\expvar{\mu^2}\frac{\sigma^2}{m} + O(x^8)}{\expvar{\mu^2}x^2 + \frac{\sigma^2}{m}+O(x^4)}\\
    &=\frac{\frac{\xi}{2} x^6 + \expvar{\mu^2} \frac{\sigma^2}{m}}{\expvar{\mu^2}x^2}+O(m^{-1},x^6)\,
\end{align}
The expression is minimized by 
\begin{align}
    x^* &= \frac{\expvar{\mu^2}^{1/6}}{\xi^{1/6}}\left(\frac{\sigma^2}{m}\right)^{1/6}\,,\\
    \edisboth{\epstot^2}(x^*)
    &\rightarrow\frac{3\xi^{1/3}}{2 \expvar{\mu^2}^{1/3}}\left(\frac{\sigma^2}{m}\right)^{2/3}\propto m^{-2/3}\,.
   \end{align}
For $m\rightarrow\infty$, taking the limit of \cref{eq:opt_w_single}, the single measurement scheme will converge to a 
a standard central differences method  with coefficient $\cscalar^*\sim \frac{1}{x^*}$.
Therefore, the expected statistical error $\edisboth{\estat^2} = \cscalar^2\frac{\sigma^2}{m}$ becomes 
\begin{align}
    \edisboth{\estat^2}\approx\frac{\sigma^2}{mx^{*2}}=\frac{2}{3} \edisboth{\epstot^2}\,,
\end{align}
meaning that for $m\rightarrow\infty$ the statistical error makes up $2/3$ of the total error.

\subsubsection{%
Limit for few measurements \texorpdfstring{-- proof of \cref{thm:single_small_m}}{}
}

\label{ap:single_om}
We are now going to turn our attention toward the regime of few measurements.
In this scenario statements about the scaling of the total expected error do not make much sense since the \ac{SLGE} strategy tends to return very small gradient estimates in the case of low measurement accuracy.
This means that the total expected error becomes the expected magnitude of the partial derivative 
($\edisboth{\epstot^2}=\edistheta{\delta^2}$), 
where it plateaus, as can be seen in \cref{fig:gradqualities}.
Instead, the \om\ describes a useful quantity in this regime.
One can straightforwardly derive these simple expressions.
\begin{align}
    \label{eq:deriv_brackets}
    \edis{\delta^2 }&=\expvar{\mu^2}\,,\\
    \label{eq:cov_brackets}
    \edisboth{\hat \delta \delta} &=\cscalar\expvar{\mu\sin(\mu x)}\,,\\
    \label{eq:estim_brackets}
    \edisboth{\hat{\delta}^2} &=\cscalar^2\expvar{\sin^2(\mu x)}+\frac{\sigma^2}{m}\cscalar^2\,.
\end{align}
Also, for the second derivative of the cost function $F$ which we denote as $\Delta$ we get
\begin{align}\label{eq:deriv2_brackets}
    \edis{\Delta^2}\coloneqq\edis{(\partial_x^2 F(0))^2}=\expvar{\mu^4}\,,
\end{align}
which uses that $\sigakt=\edis{b_k^2}$ resulting from the shift invariance assumption of $\mathcal{D}_{\vec \theta}$.

To proof \cref{thm:single_small_m}, proceed to derive a lower bound to the \om\ ($\Omega$) as defined in \cref{eq:omega}.
For $m\ll\sigma^2/\expvar{\sin^2(\mu x)}$, we can approximate $\Omega^2$ as 
\begin{align}
    \Omega^2 &= \frac{\expvar{\mu \sin(\mu x^*)}^2}{\expvar{\mu^2}\left(\expvar{\sin^2(\mu x^*)}+\frac{\sigma^2}{m}\right)}\\
    &=m\frac{\expvar{\mu \sin(\mu x^*)}^2}{\expvar{\mu^2}\sigma^2} +O(m^2)\,.
\end{align}
using the following \cref{lem:mu_av} for the last step.
\begin{align}
\label{eq:omega_lowerbound}
    \Omega^2 &\geq m\frac{\expvar{\mu^2}^2}{\sigma^2\expvar{\mu^4}} +O(m^2)\\
    &= m\frac{\edis{\delta^2}^2}{\sigma^2\edis{\Delta^2}} +O(m^2)\,,
\end{align}
Here we also inserted the definitions~\cref{eq:deriv_brackets} and~\cref{eq:deriv2_brackets}
\qed
\begin{lemma}\label{lem:mu_av}
For any ensemble average over $\mu\geq 0$ with $\expvar{\mu}\neq0$, the following statement holds
\begin{align}
\label{eq:to_show}
    \max_{x\geq 0}\expvar{\mu \sin(\mu x)}\geq  \sqrt{\frac{\expvar{\mu^2}^3}{\expvar{\mu^4}}}\,.
\end{align}
\end{lemma}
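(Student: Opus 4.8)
The plan is to prove the bound by exhibiting a single explicit probe position $x_0$ at which $\expvar{\mu\sin(\mu x_0)}$ already attains the right-hand side; since the left-hand side is a maximum over $x\geq 0$, producing any such $x_0$ is enough. Writing $\expvar{\mu\sin(\mu x)}=\sum_k \sigakt\,\mu_k\sin(\mu_k x)$ with nonnegative weights $\sigakt$, I first note that $\expvar{\mu}\neq 0$ forces some $\sigakt\mu_k>0$, hence $\expvar{\mu^2}>0$ and $\expvar{\mu^4}>0$, so that $\nu_{\mathrm{eff}}\coloneqq\sqrt{\expvar{\mu^4}/\expvar{\mu^2}}$ is well defined. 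I would then take the ``effective'' position $x_0=\tfrac{\pi}{2\nu_{\mathrm{eff}}}$, which is exactly the peak-type position highlighted in \cref{cor:single_is_good}.

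The key step is a one-variable inequality: the cubic $\tfrac32 t-\tfrac12 t^3$ is a global lower bound for $\sin(\tfrac{\pi}{2}t)$ on $t\geq 0$. I would prove this by setting $D(t)=\sin(\tfrac{\pi}{2}t)-\tfrac32 t+\tfrac12 t^3$ and showing $D\geq 0$. One has $D(0)=D(1)=0$; the cubic part $\tfrac12 t^3-\tfrac32 t$ attains its minimum $-1$ at $t=1$, precisely where $\sin(\tfrac{\pi}{2}t)$ attains its maximum $+1$, so the two extremal contributions cancel only at $t=1$. Away from $t=1$ the cubic wins: for $t\geq\sqrt3$ the cubic part alone exceeds $1$, while on $[0,1]$ the signs $D'(0)>0$, $D'(1)=0$, $D''(1)>0$ identify $t=1$ as an interior minimum of value $0$. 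The crucial feature is that this minorant is tangent to the sine at its \emph{peak} $t=1$, and that its two monomials $t$ and $t^3$ reproduce exactly the moments $\expvar{\mu^2}$ and $\expvar{\mu^4}$.

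Substituting $t=\mu_k/\nu_{\mathrm{eff}}$, so that $\mu_k x_0=\tfrac{\pi}{2}(\mu_k/\nu_{\mathrm{eff}})$, then multiplying by $\sigakt\mu_k\geq 0$ and summing over $k$ gives
\begin{align}
\expvar{\mu\sin(\mu x_0)}\geq\frac{3}{2\nu_{\mathrm{eff}}}\expvar{\mu^2}-\frac{1}{2\nu_{\mathrm{eff}}^3}\expvar{\mu^4}.
\end{align}
Because $\nu_{\mathrm{eff}}^2=\expvar{\mu^4}/\expvar{\mu^2}$, the right-hand side collapses to $\bigl(\tfrac32-\tfrac12\bigr)\sqrt{\expvar{\mu^2}^3/\expvar{\mu^4}}=\sqrt{\expvar{\mu^2}^3/\expvar{\mu^4}}$, which is the claim. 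Equivalently, one may keep $x_0=\tfrac{\pi}{2\sqrt{V}}$ with $V$ free and maximize $\tfrac32\expvar{\mu^2}V^{-1/2}-\tfrac12\expvar{\mu^4}V^{-3/2}$ over $V$; the optimizer is $V=\expvar{\mu^4}/\expvar{\mu^2}$, which is how the position $x_0=\pi/(2\nu_{\mathrm{eff}})$ is discovered in the first place.

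The main obstacle is finding the correct minorant, not the ensuing algebra. The naive Taylor bound $\sin\tau\geq\tau-\tau^3/6$ is tangent at the origin and, after the same optimization, delivers only $\tfrac{2\sqrt2}{3}\approx 0.94$ times the target, so it is genuinely too weak; the proof must instead exploit that $\sin$ saturates at $1$, via a cubic tangent at the peak. That this is sharp is seen from the single-frequency case, where $\max_{x\geq0}\expvar{\mu\sin(\mu x)}$ and $\sqrt{\expvar{\mu^2}^3/\expvar{\mu^4}}$ coincide, which also explains why the optimal probe sits at $\pi/(2\nu_{\mathrm{eff}})$.
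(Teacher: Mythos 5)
Your strategy is sound and genuinely different from the paper's. Both proofs probe the same point $x^*=\tfrac{\pi}{2}\sqrt{\expvar{\mu^2}/\expvar{\mu^4}}$, but the paper establishes the stronger interval statement $\expvar{\mu\sin(\mu x)}\geq\sqrt{\expvar{\mu^2}^3/\expvar{\mu^4}}\,\sin(\nu_{\mathrm{eff}}x)$ on $[0,x^*]$ by comparing derivatives, which it reduces to $\sum_i p_i\cos(\mu_i x)\geq\cos(x)$ and then settles with a Lagrange-multiplier analysis showing the worst case is $\mu_i\equiv 1$. You instead push the entire multivariable structure into a single one-variable fact, the cubic minorant $\sin(\tfrac{\pi}{2}t)\geq\tfrac32 t-\tfrac12 t^3$ tangent to the sine at its peak, and let linearity of $\expvar{\argdot}$ do the rest; the algebra collapsing to $\bigl(\tfrac32-\tfrac12\bigr)\sqrt{\expvar{\mu^2}^3/\expvar{\mu^4}}$ is correct, as is your observation that the origin-tangent Taylor bound only yields a factor $\tfrac{2\sqrt2}{3}$ and is therefore genuinely insufficient. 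This is shorter and more elementary than the paper's route; what it gives up is the pointwise sine lower bound on the whole interval, which the paper gets for free but does not actually need for the lemma.

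The one place you must tighten is the verification of the minorant itself. Your claim that the cubic part $\tfrac12 t^3-\tfrac32 t$ exceeds $1$ for $t\geq\sqrt3$ is false: $t^3-3t-2=(t-2)(t+1)^2$, so it exceeds $1$ only for $t\geq 2$, and at $t=\sqrt3$ it equals $0$. As written you therefore never cover $(1,2)$, and on $[0,1]$ the sign data $D'(0)>0$, $D'(1)=0$, $D''(1)>0$ alone does not exclude a dip of $D$ below zero in the interior. The inequality is nevertheless true and the fix is routine: for $t\geq 2$ the minorant is $\leq -1\leq\sin$; for $t\in[\sqrt3,2]$ the minorant is $\leq 0$ while $\sin(\tfrac{\pi}{2}t)\geq 0$; and on $[0,2]$ one checks that $D'''(t)=3-\tfrac{\pi^3}{8}\cos(\tfrac{\pi}{2}t)$ changes sign once, so $D''$ is first negative then positive, $D'$ decreases then increases from $D'(0)=\tfrac{\pi}{2}-\tfrac32>0$ through $D'(1)=0$, and $D$ is monotone up--down--up with $D(0)=D(1)=0$, giving $D\geq 0$. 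With that lemma pinned down, your proof is complete and correct.
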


\begin{proof}
In order to prove that the maximum satisfies this inequality, it suffices to show that the inequality is satisfied at some point $x^*$.
We choose to show this for $x^*=\frac{\pi}{2}\sqrt{\frac{\expvar{\mu^2}}{\expvar{\mu^4}}}$. 
To this end, we demonstrate that for $x\in [0, x^*]$
\begin{align}
\label{eq:to_show_more_complex}
    \expvar{\mu \sin\left(\mu x\right)}\geq \sqrt{\frac{\expvar{\mu^2}^3}{\expvar{\mu^4}}}\sin\left(\sqrt{\frac{\expvar{\mu^4}}{\expvar{\mu^2}}}x\right)
\end{align}

holds, which gives the desired result for $x=x^*$.
As the inequality is satisfied for $x=0$, we will show the general case by verifying that the left-hand side has a larger derivative than the right-hand side everywhere in this interval which implies the original claim. 

Therefore, it remains to show that
\begin{align}
    \frac{\expvar{\mu^2 \cos\left(\mu x\right)}}{\expvar{\mu^2}} \geq \cos\left(\sqrt{\frac{\expvar{\mu^4}}{\expvar{\mu^2}}}x\right)\geq 0\,.
\end{align}
By defining $p_k\coloneqq\frac{\sigakt\mu_k^2}{\sum_k\sigakt\mu_k^2}=\frac{\sigakt\mu_k^2}{\expvar{\mu^2}}$ this simplifies
to 
\begin{align}
    \sum_i p_i \cos\left(\mu_i x\right)\geq \cos\left(\sqrt{\sum p_k \mu_k^2} x\right)\geq 0
\end{align}
for $x\in\left(0, \frac{\pi}{2\sqrt{\sum_k p_k \mu_k^2}}\right]$.
Further substituting $x'\coloneqq x \sqrt{\sum p_k \mu_k^2}$ as well as $\mu_i'\coloneqq \frac{\mu_i}{\sqrt{\sum p_k \mu_k^2}}$ yields
\begin{align}
\label{eq:derivative_ineq_simplified}
    \sum_i p_i \cos\left(\mu_i' x'\right)\geq \cos\left(x'\right)\geq 0
\end{align}
for $x'\in[0,\frac{\pi}{2}]$ with $\sum p_k \mu_k'^2=1$.
This inequality needs to hold for all $\vec p>\vec 0$, $\vec {\mu'}> \vec 0$ with $\sum_k p_k=1$. 
In the interest of readability, we are going to drop the primes again from here on out.

To prove this final inequality, we reformulate the left-hand side as a minimization problem
\begin{align}
    F(\vec {\mu})=\sum_i p_i \cos\left(\mu_i x\right)+\frac{x^2}{2}\lambda \left(\sum_i p_i\mu_i^2 -1\right)
\end{align}
with Lagrange multiplier $\frac{x^2}{2}\lambda$ for the constraint $\sum_i p_i\mu_i^2=1$. 
This gives the partial derivatives
\begin{align}
    \frac{\partial F}{\partial \mu_i}&=-p_i x \sin\left(\mu_i x\right)+\lambda x^2 p_i\mu_i\,,\\
    \frac{\partial ^2F}{\partial \mu_i^2}&=-p_i x^2 \cos\left(\mu_i x\right)+\lambda x^2 p_i\,,
\end{align}
meaning that an extremal point satisfies
\begin{align}
    \lambda&=\frac{x\sum_ip_i \sin(\mu_ix)}{x\sum_ip_i\mu_ix}<1\,.
\end{align}
as well as either
\begin{align}
    \mu_i&=0 \text{, or }\\ \mathrm{sinc}(\mu_ix)&=\lambda\,.
\end{align}

Since $\left.\frac{\partial ^2F}{\partial \mu_i^2}\right|_{\mu_i=0}=p_ix^2(\lambda-1)$ is negative, $\mu_i=0$ does not describe a local minimum. 
Therefore, $\mu_i>0$ holds. 
Since $\sum_i p_i\mu_i^2=1$ and $\sum_i p_i = 1$, there exists a $\mu_\alpha\leq1$, meaning $\lambda=\mathrm{sinc}(\mu_\alpha x)\geq \mathrm{sinc}(\frac{\pi}{2})=\frac{2}{\pi}$, which implies that all $\mu_i$ have the same value since $\mathrm{sinc}$ is injective on the considered interval. 
With the constraint, this yields $\mu_i\equiv 1$. 

The second derivative at this point is 
\begin{align}
    \left.\frac{\partial ^2F}{\partial \mu_i^2}\right|_{\vec \mu=\vec 1}=p_ix^2 (\lambda -\cos(x))=p_i(x\sin(x)-x^2 \cos(x))> 0
\end{align}
for $0< x\leq \frac{\pi}{2}$, which - as the Hessian is diagonal - means that $\vec\mu=\vec 1$ is indeed the global minimum of $F(\vec \mu)$ in the considered interval.

Since this minimal $\vec \mu$ satisfies \cref{eq:derivative_ineq_simplified}, this concludes the proof of the lemma.
\end{proof}

\subsubsection{Proof of 
\texorpdfstring{\cref{thm:single_is_good} and \cref{cor:single_is_good}}{theorem 4 and corollary 1}
}
\label{ap:single_is_good}
We derive a bound on the optimal $\frac{\Omega^2_{\mathrm{S}}}{\Omega^2_{\mathrm{UB}}}$. This expression can be written as
\begin{align}
    \frac{\Omega^2_{\mathrm{S}}}{\Omega^2_{\mathrm{UB}}} &= \frac{\expvar{\mu \sin(\mu x)}^2}{\expvar{\mu^2}\left(\expvar{\sin^2(\mu x)}+\frac{\sigma^2}{m}\right)}\times \frac{\expvar{\mu^2}+\frac{\nu^2\sigma^2}{m}}{\expvar{\mu^2}}\\
    &=\frac{\frac{\expvar{\mu \sin(\mu x)}^2}{\expvar{\mu^2}^2}\left(1+\frac{\nu^2\sigma^2}{m\expvar{\mu^2}}\right)}{\frac{\expvar{\sin^2(\mu x)}}{\expvar{\mu^2}}+\frac{\sigma^2}{m\expvar{\mu^2}}}\, .
\end{align}
By defining a probability vector $p_k=\sigakt \frac{\mu_k^2}{\expvar{\mu^2}}$ and $\alpha=\frac{\nu^2 \sigma^2}{m\expvar{\mu^2}}$
we get 
\begin{align}
    \frac{\Omega^2_{\mathrm{S}}}{\Omega^2_{\mathrm{UB}}}&=\frac{\left(\sum_k p_k\frac{\sin(x\mu_k)}{\mu_k}\right)^2(1+\alpha)}{\sum_k p_k\left(\frac{\sin(x\mu_k)}{\mu_k}\right)^2+\frac{\alpha}{\nu^2}}\\
    &=\frac{\left(\sum_k p_k\frac{\nu\sin(x\mu_k)}{\mu_k}\right)^2(1+\alpha)}{\sum_k p_k\left(\frac{\nu\sin(x\mu_k)}{\mu_k}\right)^2+\alpha}
\end{align}
by substituting $\mu_k\rightarrow \mu_k \nu$ and $x\rightarrow x\nu$, we have the requirement that $\alpha\geq0$ and $\mu_k\in [0,1]$.
and
\begin{align}
    \frac{\Omega^2_{\mathrm{S}}}{\Omega^2_{\mathrm{UB}}}
    &=\frac{\left(\sum_k p_k\frac{\sin(x\mu_k)}{\mu_k}\right)^2(1+\alpha)}{\sum_k p_k\left(\frac{\sin(x\mu_k)}{\mu_k}\right)^2+\alpha}\\
   \end{align}
or written as a problem we want to find the distribution for the worst \om\ ratio using the best \ac{SLGE} strategy.
\begin{align}
     \left(\frac{\Omega^2_{\mathrm{S}}}{\Omega^2_{\mathrm{UB}}}\right)^*&=\min_{\substack{\alpha\geq 0,\,\vec \mu\in [0,1]^{n_\mu}\\ \vec p \in [0,1]^{n_\mu},\,\sum_i p_i=1}} \left(\max_{x(\alpha,\vec \mu,\vec p)}\left(\frac{\Omega^2_{\mathrm{S}}}{\Omega^2_{\mathrm{UB}}}\right)\right)\\
     &\geq
     \min_{\substack{\alpha\geq 0}} \left(\max_{x(\alpha)}\left(\min_{\substack{\,\vec \mu\in [0,1]^{n_\mu}\\ \vec p \in [0,1]^{n_\mu},\,\sum_i p_i=1}}\left(\frac{\Omega^2_{\mathrm{S}}}{\Omega^2_{\mathrm{UB}}}\right)\right)\right)\, ,
     \end{align}
where we changed to order of optimization to find a lower bound. To solve the innermost bracket, we observe that by defining $\tau(\mu)=\frac{\sin(x\mu)}{\mu}$ as a random variable that $\sum_k p_k\frac{\sin(x\mu_k)}{\mu_k}=\av{\tau}_{\vec p}$ and $\sum_k p_k\left(\frac{\sin(x\mu_k)}{\mu_k}\right)^2=\av{\tau^2}_{\vec p}$ are the first and second moment of $\tau$ with $\av{}_{\vec p}$ the expectation value of the distribution spanned by $\vec p$. This means for the expression
\begin{align}
    \frac{\Omega^2_{\mathrm{S}}}{\Omega^2_{\mathrm{UB}}}
    &=\frac{\av{\tau}_{\vec p}^2(1+\alpha)}{\av{\tau^2}_{\vec p}+\alpha}\\
   \end{align}
We also note that if we restrict $x\in [0,\pi/2]$, $\tau\in [\sin(x),x]$, meaning $\tau$ is a bounded variable. We note that for a given expectation value $\av{\tau}_{\vec p}$, the expression is minimized for a distribution with the largest possible second moment. This is described by a distribution only on the boundary of the parameter space. 
\begin{align}
    \av{\tau}_{\vec p}&=q\sin(x)+(1-q)x\\
    \av{\tau^2}_{\vec p}&=q\sin^2(x)+(1-q)x^2
\end{align}
for $q\in[0,1]$.
\begin{align}
    \left(\frac{\Omega^2_{\mathrm{S}}}{\Omega^2_{\mathrm{UB}}}\right)
    &\geq\min_{\alpha>0}\max_{x(\alpha)\in[0,\pi/2]} \min_{q\in [0,1]}\frac{(q\sin(x)+(1-q)x)^2(1+\alpha)}{q\sin^2(x)+(1-q)x^2+\alpha}
\end{align}
if this expression is minimized for $q$, one obtains
\begin{align}\label{eq:op_ax}
    \left(\frac{\Omega^2_{\mathrm{S}}}{\Omega^2_{\mathrm{UB}}}\right)^*(\alpha,x)
    \geq\begin{cases}
    4(1+\alpha)\frac{x\sin(x)-\alpha}{(x+\sin(x))^2}\qquad 2\alpha\leq \sin(x)x-\sin^2(x)\\
    \frac{\sin^2(x)(1+\alpha)}{a+\sin^2(x)}\qquad\qquad \qquad \mathrm{else}
    \end{cases}
\end{align}
where the latter is always at least 1 for $x=\pi/2$. Meaning that for $\alpha\geq \frac{\pi-2}{4}$, $\Omega_\mathrm{S}$ is always at least as large as $\Omega_{\mathrm{UB}}$. For $\alpha< \frac{\pi-2}{4}$, we find a lower bound by choosing $x(\alpha)= 2\alpha+1$, where $x\in[1,\frac{\pi}{2}]$. This returns
\begin{align}
    \left(\frac{\Omega^2_{\mathrm{S}}}{\Omega^2_{\mathrm{UB}}}\right)^*
    \geq \min_{\alpha\in [0,\frac{\pi-2}{4}]}4(1+\alpha)\frac{(2\alpha+1)\sin(2\alpha+1)-\alpha}{(2\alpha+1+\sin(2\alpha+1))^2}\geq 0.984\,,
\end{align}
which means $\Omega_{\mathrm{S}} \geq 0.99\Omega_{\mathrm{UB}}$ showing \cref{thm:single_is_good}.
\qed 

To proof c\cref{cor:single_is_good} we use \cref{eq:op_ax}, but keep a fixed $x=\pi/2$. This corresponds to a simple central differences strategy at $x=\frac{\pi}{2\nu}$ in the original case. Here we get 
\begin{align}
    \left(\frac{\Omega^2_{\mathrm{S}}}{\Omega^2_{\mathrm{UB}}}\right)\geq 4(1+\alpha)\frac{\pi/2-\alpha}{(\pi/2+1)^2}\geq 2\frac{\pi}{(\pi/2+1)^2}\geq 0.95
\end{align}
regardless of the underlying distribution ($\mathcal{D}_{\theta},\sigma$).\qed
\subsection{Barren plateau and 2-design calculations}
\label{ap:2design}
To find estimates of the size of the Fourier coefficients, we need to estimate $\edistheta{c_{ij}c^*_{kl}}$ given by
\begin{align}
    \edistheta{c_{ij}c^*_{kl}}=\int \bra{\Psi}U^\dagger P_i V^\dagger O V P_j U \ket{\Psi}\bra{\Psi}U^\dagger P_l V^\dagger O V P_k U \ket{\Psi}\mathrm{d}V \mathrm{d}U\,.
\end{align}
Here $U$ describes the ensemble of unitaries that are applied in the \ac{VQA} before the layer of interest, $V$ the unitaries after the layers, but before the measurement. We assume that both $U$ and $V$ describe $2$-designs. This is useful because it allows us to use the identity for Haar random unitaries
\begin{equation}\label{eq:ABtwirl}
    \int U^\dagger A U\rho U^\dagger B U\mathrm{d}U=\frac{\1\Tr(\rho)}{d}\left(\frac {d\Tr(AB)}{d^2-1}-\frac{\Tr(A)\Tr(B)}{d^2-1}\right)+\rho\left(\frac{\Tr(A)\Tr(B)}{d^2-1}-\frac{\Tr(AB)}{d(d^2-1)}\right) \, .
\end{equation}
where $d$ is the Hilbert space dimension.
We set  
\begin{align}
    \rho_{ij}=P_iU\ket{\Psi}\bra{\Psi}U^\dagger P_j
\end{align} 
and use the identity \cref{eq:ABtwirl} to obtain 
\begin{align}
    \edistheta{c_{ij}c^*_{kl}}    &=\int \Tr[V^\dagger O V \rho_{jl} V^\dagger O V\, \rho_{ki}]\, \mathrm{d}V\mathrm{d}U
    \\
    &=\frac{1}{d^2 - 1}\int \Tr\biggl[
        \frac{\1 \Tr[\rho_{jl}]}{d}\left(d \Tr[O^2]-\Tr[O]^2\right)\, \rho_{ki}
        + \rho_{jl} \left(\Tr[O]^2 - \Tr[O^2]/d\right)\, \rho_{ki}
    \biggr]\, \mathrm{d}U 
    \\
    &=\frac{d\Tr[O^2]-\Tr[O]^2}{d(d^2-1)}\int \Tr[\rho_{ki}]\Tr[\rho_{jl}]\, \mathrm{d}U
    +\frac{\Tr[O]^2-\Tr[O^2]/d}{d^2-1}\int \Tr[\rho_{ki}\rho_{jl}]\, \mathrm{d}U
\end{align}
For the relevant terms, i.e.\ the ones with $i\neq j$, it follows that $\Tr[\rho_{ki}\rho_{jl}]=0$. 
For the first term to not vanish, we require $i=k$ and $j=l$.
With $\Tr[\ketbra \Psi\Psi]=1$ and $\Tr[P_iP_j]=0$, 
\begin{align}
    \int \Tr[\rho_{ii}]\Tr[\rho_{jj}]\, \mathrm{d}U
    &=
    \int \bra{\Psi}U^\dagger P_iU\ket{\Psi}\bra{\Psi}U^\dagger P_jU\ket{\Psi}\, \mathrm{d}U\\
    &=\frac{1}{d}\left(\frac {d\Tr[P_iP_j]}{d^2-1}-\frac{\Tr[P_i]\Tr[P_j]}{d^2-1}\right)
    +
    \left(\frac{\Tr[P_i]\Tr[P_j]}{d^2-1}-\frac{\Tr[P_iP_j]}{d(d^2-1)}\right)\\
    &=\frac{\Tr[P_i]\Tr[P_j]}{d^2-1}\left(1-\frac{1}{d}\right)\\
    &=\frac{\Tr[P_i]\Tr[P_j]}{d(d+1)}\, .
\end{align} 
This leads to
\begin{align}
    \av{|c_{ij}|^2}_{\mathcal{D}}
    &=
    \frac{\Tr[P_i]\Tr[P_j]}{d(d+1)}\left(\frac {d\Tr[O^2]-\Tr[O]^2}{d(d^2-1)}\right)
    \\
    &=\frac1d\frac{\Tr[P_i]\Tr[P_j]}{d^2}\left(\Tr[O^2/d]-\Tr[O/d]^2\right)\frac{d^3}{(d+1)(d^2-1)}\\
    &= \frac{\xi_d}d\frac{\Tr[P_i]\Tr[P_j]}{d^2}\left(\Tr[O^2/d]-\Tr[O/d]^2\right)\\
    &= \frac{\xi_d}d\Tr[P_i/d]\Tr[P_j/d]\sigma_O^2\\
    \end{align}
    with $\xi_d\coloneqq\frac{d^3}{(d+1)(d^2-1)} $, $\sigma_O^2\coloneqq \Tr[O^2/d]-\Tr[O/d]^2$ being the expected variance with respect to the maximally mixed state and $\Tr[P_i]$ is the multiplicity of the eigenvalue $\lambda_i$. The final estimate is therefore
    \begin{align}
    \sigakt&=\sum_{i\geq j:\mu_k=\lambda_i-\lambda_j} \av{|c_{ij}|^2}_{\mathcal{D}}\\
    &=\xi_d\frac{\sigma_O^2}d\sum_{i\geq j:\mu_k=\lambda_i-\lambda_j}\Tr[P_i/d]\Tr[P_j/d]
\end{align}
 
Similarly, one can obtain a shot noise estimate when measuring in the Hamiltonian eigenbasis by the one design property
\begin{align}
    \sigma^2&=\int \bra{\Psi}U^\dagger O^2 U\ket{\Psi}\mathrm{d} U-\left(\int \bra{\Psi}U^\dagger O U\ket{\Psi}\mathrm{d} U\right)^2\\
    &=\Tr[O^2/d]-\Tr[O/d]^2
    =\sigma_O^2\,.
\end{align}
We note that when $O$ is not directly measured in its eigenbasis, the real shot noise variance might be significantly large, as typically multiple different measurement setting are required.
\subsubsection{Twirls of linear maps on operators with unitary 2-design}
\label{sec:two-designs}
Given a linear map $M$ on the vector space of operators and a probability distribution on the unitary group one can define the \emph{twirl} of $M$ as $T$ where
\begin{equation}
    T(X) \coloneqq \EE[U^\dagger M(U X U^\dagger)\,U]\, .
\end{equation}
The distribution of unitaries is called a \emph{unitary 2-design} if the expectation value above yields the same as the similar one for the Haar measure (see, e.g.\ the tutorial \cite{Kliesch2020TheoryOfQuantum} for details). 
In this case, $T$ satisfies the invariance condition $T(X) = U^\dagger T(UXU^\dagger)\, U$ for all operators $X$ and the expectation value can be evaluated as in the following lemma, 

\begin{lemma}[Twirl of maps on operators {\cite[Appendix]{EmeAliZyc05}}]
Let $T: \L(\CC^d) \to \L(\CC^d)$ be a linear map that satisfies the invariance 
$T(X) = U^\dagger T(UXU^\dagger) \,U $ for all $X\in \L(\CC^d)$ and unitaries $U\in \U(d)$. 
Then 
\begin{equation}\label{eq:invariantT}
    T(X) 
    = 
    \frac{\Tr[T(\1)] - \Tr[T]/d}{d^2-1} \, \Tr[X] \, \1
        + \frac{\Tr[T] - \Tr[T(\1)]/d}{d^2-1} \, X \, .
\end{equation}
In particular, 
\begin{equation}\label{eq:ABtwirl2}
    \int U^\dagger A U\rho U^\dagger B U\mathrm{d}U=\frac{\1\Tr(\rho)}{d}\left(\frac {d\Tr(AB)}{d^2-1}-\frac{\Tr(A)\Tr(B)}{d^2-1}\right)+\rho\left(\frac{\Tr(A)\Tr(B)}{d^2-1}-\frac{\Tr(AB)}{d(d^2-1)}\right) \, .
\end{equation}
\end{lemma}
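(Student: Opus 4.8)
The plan is to prove the structural identity \eqref{eq:invariantT} by representation theory and then obtain \eqref{eq:ABtwirl2} as a direct specialization. First I would observe that the invariance hypothesis $T(X) = U^\dagger T(UXU^\dagger)\,U$ says precisely that $T$ commutes with the conjugation action $X \mapsto UXU^\dagger$ of $\U(d)$ on the operator space $\L(\CC^d)$. This action is reducible and splits as $\L(\CC^d) = \CC\1 \oplus \mathcal{T}$, where $\CC\1$ carries the trivial representation (since $U\1U^\dagger = \1$) and the orthogonal complement $\mathcal{T}$ of traceless operators carries the adjoint representation; the latter is irreducible and, having dimension $d^2-1 \neq 1$, is not isomorphic to the trivial summand (assuming $d\geq 2$).

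Given this decomposition, I would invoke Schur's lemma: an equivariant endomorphism of a direct sum of two pairwise non-isomorphic irreducibles, each of multiplicity one, must act as a scalar on each summand and cannot mix them. Writing $P_0(X) = \frac{\Tr[X]}{d}\1$ and $P_1(X) = X - P_0(X)$ for the two orthogonal projectors, this gives $T = \alpha P_0 + \beta P_1$ for scalars $\alpha,\beta$. I would then pin down the scalars by evaluating two invariants: applying $T$ to $\1$ gives $T(\1) = \alpha\1$, hence $\alpha = \Tr[T(\1)]/d$; and taking the trace of $T$ as a superoperator, using $\Tr[P_0] = 1$ and $\Tr[P_1] = d^2-1$, gives $\Tr[T] = \alpha + \beta(d^2-1)$, hence $\beta = (\Tr[T] - \Tr[T(\1)]/d)/(d^2-1)$. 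Substituting $T(X) = \beta X + \frac{\alpha-\beta}{d}\Tr[X]\1$ and simplifying the coefficient of $\Tr[X]\1$ down to $(\Tr[T(\1)] - \Tr[T]/d)/(d^2-1)$ yields \eqref{eq:invariantT} after a short algebraic rearrangement.

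For the specialization \eqref{eq:ABtwirl2}, I would define the map $T(\rho) \coloneqq \int U^\dagger A U\, \rho\, U^\dagger B U\,\mathrm{d}U$ and first verify that it satisfies the invariance hypothesis: substituting $W = UV$ and using the right-translation invariance of the Haar measure shows $V^\dagger T(V\rho V^\dagger)V = T(\rho)$. The two required invariants then follow from one-design identities: using $U U^\dagger = \1$ one gets $T(\1) = \int U^\dagger AB\,U\,\mathrm{d}U = \frac{\Tr[AB]}{d}\1$, so $\Tr[T(\1)] = \Tr[AB]$; and since the superoperator $X \mapsto MXN$ has trace $\Tr[M]\Tr[N]$ for fixed $M,N$, the integrand $\rho \mapsto (U^\dagger A U)\rho(U^\dagger B U)$ has trace $\Tr[A]\Tr[B]$ independently of $U$, giving $\Tr[T] = \Tr[A]\Tr[B]$. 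Plugging these two values into \eqref{eq:invariantT} and distributing the factor $1/d$ reproduces \eqref{eq:ABtwirl2}.

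The main obstacle, and really the only nontrivial input, is the irreducibility of the traceless block $\mathcal{T}$ under conjugation, since this is exactly what licenses the reduction via Schur's lemma to two scalars; everything afterward is bookkeeping of traces. I would either cite the standard decomposition of $\CC^d \otimes (\CC^d)^*$ into its trivial and adjoint summands, or, to keep the argument self-contained, verify directly that any conjugation-invariant subspace containing a nonzero traceless operator must already contain every traceless operator.
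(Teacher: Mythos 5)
Your proposal is correct and follows essentially the same route as the paper: both reduce $T$ via Schur's lemma to a two-parameter combination spanning $\{\Tr[X]\1/d,\; X\}$ (the paper writes $T = aR + b\,\id$ where you use the orthogonal projectors $P_0, P_1$, a trivial reparametrization), fix the coefficients from the two invariants $\Tr[T]$ and $\Tr[T(\1)]$, and then specialize by computing $\Tr[T]=\Tr[A]\Tr[B]$ and $\Tr[T(\1)]=\Tr[AB]$ for the twirl integral. Your explicit justification of irreducibility of the traceless block is a welcome addition that the paper delegates to the cited reference.
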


\begin{proof}
We denote the identity map by $\id$ and the completely dephasing channel by $R(X) \coloneqq \Tr[X]\, \1/d$. 
Due to a standard argument \cite[Appendix]{EmeAliZyc05} relying on Schur's lemma one can write 
\begin{equation}\label{eq:T:LinComb}
    T = a R + b\, \id
\end{equation}
for some coefficients $a,b\in \CC$. 
Taking the trace of this equation and of $T(\1) = a R(\1) + b\, \id(\1)$ results in 
\begin{align}
    \Tr[T] &= a + b d^2\, , \\
    \Tr[T(\1)] &= ad + bd\, .
\end{align}
Solving for $a$ and $b$ and inserting these coefficients in the ansatz \cref{eq:T:LinComb} yields the statement \cref{eq:invariantT}. 

Next, we take $T$ to be the RHS of Eq.~\cref{eq:ABtwirl2}. 
Then $T$ satisfies the lemma's invariance condition and one can show that $\Tr[T]=\Tr[A]\Tr[B]$ and $\Tr[T(\1)] = \Tr[AB]$, which results in \cref{eq:ABtwirl2}. 

See also \cite[Theorem~51]{Kliesch2020TheoryOfQuantum} for more explicit calculations relying on the second moment operator of a unitary $2$-design. 
\end{proof}

\subsection{Analytical estimates for the Fourier coefficients of the QAOA Hamiltonian}
\label{ap:a_dependence}
As is mentioned in the main text, to obtain quantitative results for correlation matrix of the \ac{QAOA}, we are considering the expectation value over all graph instances
\begin{align}
    \edisboth{\argdot}\rightarrow \av{\argdot}_{s,\vec\theta,G}\,.
\end{align}
the ensemble is randomly chosen graphs with $N$ vertices and $M$ edges.
\subsubsection{Case for \texorpdfstring{$L\rightarrow\infty$}{large L}}
\label{ap:linf_est}
For $L\rightarrow\infty$ using the derived quantity for the 2-design \cref{eq:sigakt_2design}, we need to calculate the expression
\begin{align}
    \sigakt&=\edisG{\frac{\sigma_O^2}d\sum_{i\geq j:\mu_k=\lambda_i-\lambda_j}\Tr[P_i/d]\Tr[P_j/d]}
\end{align}
over all graph instances.
Since $O=H_c = \frac{1}{2}\sum_{(i,j) \in E} \left(Z_iZ_j - \mathds{1}\right)$, It follows with $\Tr(Z_iZ_j)=0$ that $\Tr(O/d)=-M/2$. Equally, by only counting the terms which are proportional to the identity, we get $\Tr(O^2/d)=\frac{M^2}{4}+\frac{M}{4}$, meaning $\sigma_O^2=\frac{M}{4}$. This quantity is independent of the particular graph taken from the distribution.

We note that the \ac{VQA} is invariant under the parity symmetry in the $X$-basis 
\begin{align}
    \Pi_x=\sigma_x\otimes \cdots\otimes \sigma_x
\end{align}
the state only lives in the even number of ones sector, which also halves the Hilbert space dimension $d=2^{N-1}$.

To find $\left.\edistheta{a_{k}^{2}}\right|_b$, we first note that $H_b$ is independent of the particular graph instance chosen.
The eigenvalues of $H_b$ describe spin sectors with $\lambda_i=-\frac{N}{2}+i$ and $\Tr(P^B_i)=\binom{N}{i}$. Introducing the symmetry reduced subspace means that
\begin{align}
    \Tr(P^b_i)=\left\{\begin{matrix}\binom{N}{i}\; , \quad i\; \textrm{even}\\
    0\; , \quad \textrm{else}
    \end{matrix}\right.
\end{align}
which gives the final expression
\begin{align}
    \left.\edistheta{a_k^2}\right|_b&=\frac{M}{4\cdot8^{N-1}}\sum_{j\in \{0,2,\dots N-k\}}\binom{N}{j+k}\binom{N}{j}\\
    &=\frac{M}{4\cdot8^{N-1}}\frac{(2N)!}{(N-k)!(N+k)!}\label{eq:a_b_barren}
\end{align}
For $H_c$, determining the expression requires to take a distribution over the graph instances. 
Qualitatively, it would already be sufficient to calculate $\edisG{\Tr(P^c_i)}$ individually, but for completeness we will derive the full correlation between the two terms.

The relevant expression is
\begin{align}
	 \zeta_k=\sum_{i=0}^M\av{\Tr(P^c_k) \Tr(P^c_{k+i})}_{G}/d^2\, ,
\end{align}
which can be calculated by solving a combinatorial problem. For this, we reformulate the question as asking what the likely hood is that for two different partitions the fist cuts $k$ edges and the second $k+j$ edges. This is because by design $\Tr(P^c_k)/d$, describes the likelyhood of a random partition having $k$ edges cut.
To find this quantity, we classify the vertices to belong to any of the four sectors $\{(0,0),(0,1),(1,0),(1,1)\}$ indicating on which side of the cuts the vertex is placed ($(0,0)$ mean it is on the first side for both cuts) The number of vertices in each sectors is $\vec s=(s_{00},s_{01},s_{10},s_{11})$ with the probability for a particular $\vec s$ being
\begin{align}\label{eq:p(s)}
   p(\vec s)=\binom{n}{s_{00},s_{01},s_{10},s_{11}} \frac{1}{4^n}\,
\end{align}
as the process of chosing vertex position is described by a multinomial distribution where each sector has probability $p=1/4$.
Next, we sample the edges: From $|\Gamma|=\binom{n}{2}$ possible edges, $E_1=s_{00}s_{10}+s_{01}s_{11}$ edges are cut only by the first partition and $E_2=s_{00}s_{01}+s_{10}s_{11}$ cut only the second 
We do not consider the cases where the edges is cut by neither or both partitions as this will not affect the overall result.  As the particular edges are chosen randomly from $\Gamma$, the process is described by a multi-hypergeometric distribution 
\begin{align}
    \zeta_k=\sum_{\vec s\in [n]^4,i\in [M]} p(\vec s) \frac{\binom {E_1(\vec s)}{i} \binom {E_2(\vec s)}{i+k}\binom{|\Gamma|-E_1(\vec s)-E_2(\vec s)}{M-2i-k}}{\binom{|\Gamma|}{M}}
\end{align}

which gives the final result for $H_c$
\begin{align}\label{eq:a_c_barren}
    \left.\edisG{\edistheta{a_{k}^{2}}}\right|_c&=\frac{M}{2^{N-3}}\zeta_k\,.
\end{align}
\subsubsection{Case for one layer \texorpdfstring{$L=1$}{L=1}}
In the following section we derive a single layer ($L=1$) estimate, by relating the underlying problem to a graph problem. We note that rigorous analysis performed here is excessive, because for practical purposes simply sampling the derived graph problem for the particular graph instance will already give a reliable estimate. However, for completeness, we derive exact analytic results for the \ac{QAOA}.
\label{ap:l1_est}
We begin by defining three observables
\begin{align}
    O_{zz}&=
    \frac12
    \sum_{ij\in E} Z_iZ_j\\
    O_{yy}&=
    \frac12
    \sum_{ij\in E} Y_iY_j\\
    O_{yz}&=
    \frac12
    \sum_{ij\in E} Y_iZ_j+Z_iY_j\,.
\end{align}
Note that the derivative is not affected by replacing $H_c\rightarrow O_{zz}$ as this only describes a constant offset. 
The cost function for the first layer can be subdivided
\begin{align}
    F(\gamma,\beta)
    &=\frac 12 (1+\cos(2\beta))C_{zz}(\gamma)+\frac 12 (1-\cos(2\beta))C_{yy}(\gamma)+\frac 12\sin(2\beta)C_{yz}(\gamma)
\end{align}
where
\begin{align}
    C_\rho(\gamma)&= \sum_{ij} \bra{\Psi_i}O_{\rho}\ket{\Psi_j}\e^{-\i \gamma (\lambda_i-\lambda_j)}\\
    &= \sum_{k=0}^M \left(\sum_{i=0}^{M-k}\bra{\Psi_i}O_{\rho}\ket{\Psi_{i+k}}\e^{\i k\gamma}\right)\\
    &\eqqcolon\sum_{k=0}^M C_\rho^k(\gamma)
\end{align}
with $\rho \in \{zz,yy,yz\}$ and $\ket{\Psi_j}=P_j^c\ket{+}$.
We can quickly check that
\begin{align}
    C_{zz}(\gamma)&= \sum_{ij} \bra{\Psi_i}O_{zz}\ket{\Psi_j}\e^{-\i \gamma (\lambda_i-\lambda_j)}\\
     &=\sum_{i} \bra{\Psi_i}O_{zz}\ket{\Psi_i}
     =\sum_{i} \bra{+}P_iO_{zz}P_i\ket{+}
     =\bra{+}O_{zz}\ket{+}
    =0
\end{align}
which
means the expression simplyifies to
\begin{align}
    F(\gamma,\beta)
    &=\frac 12 (1-\cos(2\beta))C_{yy}(\gamma)+\frac 12\sin(2\beta)C_{yz}(\gamma)
\end{align}
The evolution of $H_b$ only has one non-vansihing frequency $\edis{a_2^2}$. By integrating over $\beta$ and $\gamma$ w.r.t. the correct frequency we get
\begin{align}
    \left.\edis{a_2^2}\right|_b=\frac{\av{|C_{yy}|^2}_{\gamma}+\av{|C_{yz}|^2}_\gamma}{8}=\sum_{k=0}^{M}\frac{\av{|C^k_{yy}|^2}_{\gamma}+\av{|C^k_{yz}|^2}_\gamma}{8}
\end{align}
and
\begin{align}
    \left.\edis{a_k^2}\right|_c=\sum_{k=0}^{M}\frac{3\av{|C^k_{yy}|^2}_{\gamma}+\av{|C^k_{yz}|^2}_\gamma}{8}
\end{align}
where 
\begin{align}\label{eq:fouriersq}
    \av{|C^k_{\rho}|^2}_{\gamma}=\sum_{i=0}^M\bra{\Psi_i}O_{\rho}\ket{\Psi_{i+k}}\times \sum_{j=0}^M\bra{\Psi_{j+k}}O_{\rho}\ket{\Psi_{j}}\,.
\end{align}
The remainder of this section shows how~\cref{eq:fouriersq} can be interpreted as a graph problem and how to calculate the expression for general graphs with $N$ vertices and $M$ edges.
Similar to in the last section, computational basis describes a particular bi-partition of vertices with every qubit describing one vertex.
with each block describing different partitions, the vertices in each block the number of vertices in each are labeled again by $\vec s=(s_{00},s_{01},s_{10},s_{11})$ with the same probability as in \cref{eq:p(s)}. $O_\rho$ describes a certain graph operation which will be performed on both partitions. Only if in both partitions, the number of edges cut changes by $k$, will this partition pair contribute to $\av{|C^k_{\rho}|^2}_{\gamma}$.

For the action of $O_\rho$, with $\rho \in \{yy,yz \}$, we use $Y=-iXZ$ to rewrite $YY= -XX\times ZZ$ and $ZY+YZ=-i(\1X+\1X)\times ZZ$. Effectively $X$ corresponds to moving a vertex to the other side of the respective bi-partition and $ZZ$ accumulates a sign if the edge vertices are on opposing sides.
The overall sign is therefore determined by the rule
\begin{align}
    \mathrm{sign}(E_1,E_2)=\begin{cases}
    +& \textrm{ Both edges cut their partition or both do not.}\\
    -& \textrm{ One edges cuts its partition the other does not.}\
    \end{cases}
\end{align}
\newcommand{\w}{\omega}
Each operator $O_\rho$ sums over all edges which gives $M^2$ different edge pairs to consider. There are $\w_{\mathrm{same}}=M$ cases where the two considered are the same edge in the graph. When this is not the case, we will separate into additional cases depending on if the two edges share a vertex or not.
\begin{enumerate}
    \item The edges are the same edge: $\w_{\mathrm{same}}=M$
    \item The edges have a common vertex: $\w_{\mathrm{con}}=(M^2-M)\frac{2(N-2)}{\binom{M}{2}-1}$, where $2(N-2)$ is the number of possible edges that connect to the first edge.
    \item The have no overlapping vertex: $\w_{\mathrm{sep}}=M^2-\w_{\mathrm{same}}-\w_{\mathrm{con}}$.
\end{enumerate}
We then procced to draw the corresponding vertices depending on each of the three cases, which means drawing $2,3$ or $4$ vertices respectively. The probability of choosing an individual vertex from each set is 
\begin{align}
    q_{\vec s}(v_1=(i,j))&=\frac{s_{ij}}{\|\vec {s}\|_1}\\
    q_{\tilde{\vec s}}(v_l=(i,j))&=\frac{\tilde s_{ij}}{\|\tilde{\vec s}\|_1}\,\quad \textrm{where} \quad \tilde{\vec s}=\vec s_{\backslash \{v_1,\dots,v_{l-1}\}}\,.
\end{align}
$s_{\backslash V}$ is the sector distribution of vertices without the vertices in the set $V$. Additionally, if $\rho=yz$, we also need to decide on which the $Y$ operation acts. For this we place each vertex into the edge sets $\vec E=(E_1^Y,E_1^Z,E_2^Y,E_2^Z)\in S_{\rho,\xi}$ which encapsulates which operation operation is performed on them according to the correct $(\rho,\xi)$. If a vertex is shared, it enters twice in this edge set. We use the smaller indices for the first edge and the shared edges are also starting from $v_1$. $S_{\rho,\xi}$ describes all legal operations, for instance
    \begin{align}
        S_{yy,\mathrm{same}}&=\{(\{v_1,v_2\},\emptyset,\{v_1,v_2\},\emptyset)\}\\
    S_{yz,\mathrm{con}}&=\{ (\{v_1\},\{v_2\},\{v_1\},\{v_3\}),(\{v_2\},\{v_1\},\{v_1\},\{v_3\}),(\{v_1\},\{v_2\},\{v_3\},\{v_1\}),(\{v_2\},\{v_1\},\{v_3\},\{v_1\})\}
    \end{align}

Now, we are able to proceed to draw the remaining edges of the graph. Notably, they only contribute to $\av{|C^k_{\rho}|^2}_{\gamma}$ if they are connected to the drawn vertices with an $X$ operation. 
To summerize, the steps are:
\begin{enumerate}
    \item Select $\rho\in\{yy,yz\}$.
    \item Select a bipartitions $\vec s$ with the probability $p(\vec s)$ as defined in~\cref{eq:p(s)}.
    \item Select the relationship of the edges to each other $\xi\in(\mathrm{same,con,sep})$.
    \item Select $2-4$ vertices from $\vec s$ according to the particular case $\xi$.
    \item Select a vertex distribution according to the correct edge set $\vec E=(E_1^Y,E_1^Z,E_2^Y,E_2^Z)\in S_{\rho,\xi}$.
    \item Draw the remaining edges of the graph from a hyper-geometric distribution to calculate the effect on the vertices cut.
\end{enumerate}
 The final equation is given as
 \begin{align}
    \edis{|C_\rho^k|^2}=\sum_{\substack{\vec s \in \{0,\dots,N\}^4|\sum s_i=N \\\xi\in \{\textrm{same,con,sep}\}\\
    \vec v\in \{0,1\}^{2\times n_\xi} \\
    \vec E \in S_{\rho,\xi}}}
     p(\vec s)\w_\xi q_{\vec {s}}(v_1)\cdots q_{\vec {\tilde s}}(v_{n_\xi})\mathrm{sign}(E_1,E_2) p(k|\vec s,\vec E)\,
\end{align}
where $p(k|\vec s,\vec E)$ is the probability of the changes on both bipartitions being $k$.

There are guaranteed changes which we label with $\vec F(\vec E)$ which arises from the effect of the already drawn edges onto each other. This only relevant if the edges share vertices on which one $X$ operation operates, meaning $\xi=\mathrm{con}$ and if $\rho=yz$ also $\xi=\mathrm{same}$. The sign is determined by if the other edge is cut in the bipartition or not.
\begin{align}
    F_1(\vec E)=\delta_{|E^Y_1\cap E_2|,1}(-1)^{\sum_{w\in E_2} w_1}\\
    F_2(\vec E)=\delta_{|E_1\cap E^Y_2|,1}(-1)^{\sum_{w\in E_1} w_2}
\end{align}
where $w_i$ refers to the labeling for the $i$-th partition.
The edges themselves are drawn from a hyper-geometric distribution.
From the full edge set $\Gamma$ with $|\Gamma|=\binom{N}{2}$ after the one/two edges are chosen there remains $\Gamma_r$ edges, with $|\Gamma_r|=\binom{N}{2} -1$ for $\xi=\mathrm{same}$ and $|\Gamma_r|=\binom{N}{2} -2$ for the others. Similarly the number of edges to select are $M_r=M-1$ and $M_r=M-2$.
Any of these edges fits into one of $9$ categories, either not affecting, increasing or decreasing the value of the cut after performing the operation for each of the partitions.
For this we define $K_{\alpha,\beta}$ with $(\alpha,\beta)\in\{-1,0,1\}^2$.  The number of edges drawn from each category is labeled $X_{\alpha,\beta}$.  
We can determine $K$ with the following rules
\begin{align}
    K_{1-2m,0}&=\sum_{i\in\{0,1\},v\in E_1^Y\setminus E_2^Y} \tilde s_{v_1\oplus m,i}\qquad &\vec{\tilde s}&=\vec s_{\backslash \{E_1^Y\cup E_2^Y\cup E_1^Z\}}\\
    K_{0,1-2m}&=\sum_{i\in\{0,1\}v\in E_2^Y\setminus E_1^Y} \tilde s_{i, v_2\oplus m}\qquad &\vec{\tilde s}&=\vec s_{\backslash \{E_1^Y\cup E_2^Y\cup E_2^Z\}}\\
    K_{1-2m,1-2l}&=\sum_{v\in E_1^Y\cap E_2^Y}\tilde s_{v \oplus(m,l)}+\sum_{v\in E_1^Y,w\in E_2^Y} \delta_{v_1\oplus w_2,(m,l)}
    \qquad &\vec{\tilde s}&=\vec s_{\backslash \{E_1^Y\cup E_2^Y\cup E_1^Z\cup E_2^Z\}}\\
    K_{0,0}&=|\Gamma_r|-\sum_{\substack{\alpha,\beta\in \{-1,0,1\}^2\\ (\alpha,\beta)\neq (0,0)}}K_{\alpha,\beta}
\end{align}
Here $E_1\cap E_2$ refers to vertices that are shared by the two edges. 
The corresponding probability distribution is then given by
\begin{align}
    g(k|\vec s,\vec E)=g(k|\vec K(\vec s,\vec E),\vec F(\vec E))&=\frac{1}{\binom{|\Gamma_{r}|}{M_{r}}}\sum_{\substack{\vec X \in [\vec K]\\\sum_{ij} X_{ij}=M_r\\\sum_{\beta} (X_{+1,\beta}-X_{-1,\beta})+F_1=k\\
    \sum_{\alpha} (X_{\alpha,+1}-X_{\alpha,-1})+F_2=k
    }} \prod_{(\alpha,\beta)\in \{-1,0,1\}^2} \binom{K_{\alpha,\beta}}{X_{\alpha,\beta}}
\end{align}
With this we have all the ingredients to calculate $\left.\edisG{\edistheta{a_{k}^{2}}}\right|_b$ and $\left.\edisG{\edistheta{a_{k}^{2}}}\right|_c$ the result of which is plotted in \cref{fig:a_dependence}


\end{document}